\documentclass[12pt]{article}
\pdfoutput=1
\usepackage[utf8]{inputenc}
\usepackage[margin=1in]{geometry} 
\usepackage{biblatex}
\addbibresource{Refs.bib}
\usepackage[ruled, vlined, linesnumbered]{algorithm2e}
\usepackage{amsmath}
\usepackage{amssymb}
\usepackage{amsthm}
\usepackage{hyperref}
\usepackage[nameinlink, capitalise]{cleveref}
\newtheorem{theorem}{Theorem}[section]
\newtheorem{corollary}{Corollary}[theorem]

\newtheorem{lemma}[theorem]{Lemma}
\theoremstyle{definition}

\theoremstyle{remark}

\newcommand{\T}{\mathcal{T}}
\newcommand{\R}{\mathcal{R}}
\newcommand{\nos}[2]{\text{nos}_{#1}^{#2}}
\newcommand{\pos}[2]{\text{pos}_{#1}^{#2}}

\hypersetup{
    colorlinks=true,
    linkcolor=blue,  
    urlcolor=blue,
}

\title{A $D$-competitive algorithm for the Multilevel Aggregation Problem with Deadlines}
\author{
Jeremy McMahan \thanks{Dept. of Computer Science, University of Wisconsin-Madison. \href{mailto:jmcmahan@wisc.edu}{\texttt{jmcmahan@wisc.edu}}}}
%\date{}

\begin{document}

\maketitle

\begin{abstract}
    In this paper, we consider the multi-level aggregation problem with deadlines \\ (MLAPD) previously studied by Bienkowski et al. \cite{OG}, Buchbinder et al. \cite{D}, and Azar and Touitou \cite{GF}. This is an online problem where the algorithm services requests arriving over time and can save costs by aggregating similar requests. Costs are structured in the form of a rooted tree. This problem has applications to many important areas such as multicasting, sensor networks, and supply-chain management. In particular, the TCP-acknowledgment problem, joint-replenishment problem, and assembly problem are all special cases of the delay version of the problem.

    We present a $D$-competitive algorithm for MLAPD. This beats the $6(D+1)$-competitive algorithm given in Buchbinder et al. \cite{D}. Our approach illuminates key structural aspects of the problem and provides an algorithm that is simpler to implement than previous approaches. We also give improved competitive ratios for special cases of the problem. 
    %The analysis pairs the critically overdue framework from \cite{OG} with a notion of phases to understand the evolution of the algorithm compared to the optimal. Unlike previous works that use a top-down analysis approach, our analysis uses a hybrid approach with both top-down and bottom-up aspects in order to avoid any waste in our charging argument. 
\end{abstract}
%\setcounter{page}{0}
%\thispagestyle{empty}
%\newpage

\section{Introduction}

In many optimization contexts, aggregating tasks and performing them together can be more efficient. For example, consider sending multiple gifts to a relative. If there are multiple presents, it may be cheaper to bundle them all in one large box rather than sending each present individually. Although, aggregation is useful in many contexts, it is often constrained. For the gift example, the postal service may have weight restrictions on packages so bundling items together may be impossible. In such a situation, multiple smaller bundles may have to be constructed resulting in a complex optimization problem. The type of aggregation constraints we will be interested in form a tree-like structure that commonly appear in communication networks.

%For instance, it is  more time efficient to stock up on supplies from a store rather than purchasing an item whenever the need arises. In fact, purchasing items in bulk is usually cheaper as well. As another example, consider

%The type of aggregation constraints we will be interested in can be easily seen in the context of communication networks. In every business, communication amongst employees is critical to ensure operations are running smoothly. Though, employees in such companies are usually organized into hierarchies that restricts communication. For instance, a cashier likely does not have the contact information of the CEO and vice versa. Rather, the cashier can only communicate with other employees at a local branch. In principle, the cashier could access the CEO indirectly by sending a message to the local branch's manager and then the manager send a message to another higher-up and so on until the CEO is reached. In this case, the aggregation constraints correspond to the communication network being a tree in which nodes can only communicate directly with their parent. These kinds of aggregation constraints will be the focus of this paper.

% Alternatively, there may be some delay cost that accrues the longer the task is put off.

Besides the aggregation constraints, another difficulty in these problems are the urgency of the tasks. A task might have a hard deadline of when it must be completed or may accrue a cost the longer it is delayed. At the same time, delaying a task allows more tasks to pile up which better exploits the power of aggregation. This illustrates a natural trade-off between the aggregation benefits of delaying tasks and the cost of doing so. Generally, we will refer to strategies that mostly delay tasks to maximize aggregation benefits as \textit{lazy-aggregating} and to strategies that prioritize the urgency of the tasks as \text{eager-aggregating}. 

To illustrate these ideas, consider the example of a communication tree of employees at a restaurant. A health inspector may continually issue fines to a host at a restaurant until some concern is resolved. Depending on the scope of the concern, it may indicate issues more broadly in the company and so would need to be delivered to a higher official than just the local manager. However, the cost of the fines might not be as expensive as the cost for the host to send a message up the chain of command. So, at any time, the host will have to decide whether to send the fine now or wait to receive more messages to send together. In fact, there are many problems of a similar flavor involving transmitting control packages in communication networks \cite{Gathercast, SchemesCMP, TCPOPTRand, wiresensornet}. Similarly, many optimization problems in supply chain management face similar considerations \cite{LotSizeMSA, lotsizingbook, ProductionPlanning, DynamicLotSize}.
%As a more positive example, the message to the host could be an offer for two restaurants to collaborate in catering an event. In this case, it would be natural for the other restaurant to set a deadline for when the offer must be accepted. Similarly, we could model a delay cost that involves the likelihood of missing this opportunity as time passes. 

%In the deadline case, it is tempting to think there is no benefit to sending a message before it is due. However, this is not the case. Again, suppose an employee just received an offer from a company that wants to conduct a mutually beneficial project with this store. The employee's manager may be about to send a message to the higher ups. If the employee sends the message now, it could be very cheap since all the higher levels will have aggregate overlap with the other message. However, the employee could also receive a huge number of additional offers before the deadline and so the 'breadth' savings overcomes the 'depth' savings. Without knowing the future, figuring out the optimal solution is impossible, but approximations are tractable. 

The \textit{online multi-level aggregation problem with delay} (MLAP) introduced in \cite{OG} captures many of the scenarios mentioned above. In the deadline version (MLAPD) \cite{D}, which is a special case of the delay version, requests arrive over time and must be served between their time of arrival and the time they are due. The aggregation constraints are modeled using a node-weighted rooted tree and a service involves transmitting a subtree that contains the root. All requests in a service subtree are satisfied to model aggregation. Then, the goal is to minimize the total cost of all services. MLAP encapsulates many classic problems such as the \textit{TCP-acknowledgment Problem}, the \textit{Joint Replenishment Problem} (JRP), and the \textit{Assembly Problem}.

%This definition captures many useful problems. A request being moved up the tree can represent tasks being performed where prerequisite tasks are handled first and other tasks may be completed together. For example, MLAPD captures the assembly problem which involves building some device where parts must be put together in a particular order. In fact, many classic optimization problems are captured by MLAP for a particular choice of $D$, the number of levels in the tree, which we also refer to as the depth of the tree. For $D = 1$, MLAP is equivalent to the \textit{TCP-acknowledgment problem}. For $D = 2$, MLAP is equivalent to the \textit{joint replenishment problem} (JRP) which is a important problem in supply chain management. 

%Expand on TCP-ak and JRP and assembly.

\subsection{Our Contributions}

In this paper, we present a $D$-competitive algorithm for MLAPD. This beats the previous best algorithm constructed in \cite{D}, which was roughly $6(D+1)$-competitive. Also, the proposed algorithm attacks the problem directly rather than relying on a reduction to $L$-decreasing trees. This illuminates key structural aspects of the problem and results in the algorithm being simpler to implement than previous approaches.

The main analysis used is based off the framework of critically overdue nodes introduced in \cite{OG}, but this framework is paired with a generalization of phases introduced in \cite{D2} to achieve a more refined analysis. In addition, using the ideas of late and early nodes within the phases context provides additional structure that can be exploited. Overall, the analysis uses a charging strategy that uses the concept of node investments to charge nodes proportionally to their impact on a service. The strategy exploits both top-down and bottom-up views of a service in order to maximize the precision of the charges. Investments also allow charging future nodes without causing complications. This in contrast to past approaches that only considered nodes available at the current moment or earlier. 

Additionally, we give improved competitive algorithms for special cases of MLAPD. In particular, we present an algorithm with competitive ratio strictly less than $4$ for paths. This shows that the known lower-bound of $4$ for infinite graphs does not hold for finite graphs. The main analysis used exploits a notion of phases that utilizes the special structure of path instances. This approach involves looking at nested subpaths and then charging several such subpaths to a single path of the optimal schedule.

\subsection{Related Work}

MLAP and MLAPD were originally formulated by \cite{OG}. The offline versions of MLAP and MLAPD have been well studied. Even for $D = 2$, MLAPD and so MLAP is known to be APX-hard \cite{OGJRP, JRPHard, ProductionPlanning}. When the tree is a path, MLAP can be solved efficiently \cite{5P}. For MLAPD, the best approximation factor is $2$ \cite{SensorNets, OG}. Using an algorithm for the assembly problem from \cite{PDI}, \cite{Priv} constructs a $(2 + \epsilon)$-approximation for MLAP where $\epsilon > 0$ is arbitrary. 

For the online variant, the best competitive ratio for MLAP is $O(D^2)$ due to an algorithm constructed in \cite{GF}. For MLAPD, the best competitive ratio known was $6(D+1)$ due to an algorithm developed in \cite{D}. The best known lower bound for MLAPD and MLAP for finite depth trees is $2$, which comes from the lower bound of $2$ on JRP with deadlines (JRPD) given in \cite{D2}. If the depth of the tree is unbounded, the best lower bound is $4$ as shown in \cite{OG} by constructing a hard instance where the tree is a path of arbitrary length. This improves on the lower bound of $2 + \phi$ presented in \cite{5P} for trees of unbounded depth. Additionally, \cite{OG} shows that $4$ is in fact the optimal competitive ratio for paths of unbounded length improving on the results in \cite{5P, 8P}.

There is even more known about MLAP for small $D$. For $D = 1$, MLAP is equivalent to the TCP-acknowledgment problem which involves sending control messages in a single packet in a network. The offline version of the problem is equivalent to the lot sizing problem \cite{DynamicLotSize} and can be solved efficiently \cite{ImprovedLotSize}. For the online variant, it is known that the optimal deterministic competitive ratio is $2$ \cite{TCPDelay}. The optimal competitive ratio when randomness can be used is known to be $\frac{e}{e-1}$ \cite{TCPOPTRand, Seiden00aguessing, OnlPDApproach, AdRev}. For the deadline case, a simple greedy algorithm is exactly $1$-competitive. 

For $D = 2$, MLAP is equivalent to  JRP. JRP tries to optimize the total shipping cost from suppliers to retailers when a warehouse is used as an intermediary place for storage. Although JRPD and so JRP is APX-hard \cite{OGJRP, JRPHard, ProductionPlanning}, JRP permits an $1.791$-approximation \cite{D2} which improves on the results in \cite{OneWareMultiRetail}. In addition, JRPD permits a $1.574$-approximation \cite{OGJRP}. In the online variant, \cite{OPD} gives a $3$-competitive algorithm for JRP which improves upon the $5$-competitive algorithm given in \cite{8P}. This competitive ratio is nearly tight as the best known lower bound for the competitive ratio of JRP is $2.754$ \cite{D2}. For JRPD, the optimal competitive ratio is exactly $2$ \cite{D2}.

\section{Preliminaries}

\paragraph{Multi-level aggregation problem.} The \textit{multi-level aggregation problem with deadlines} (MLAPD) is defined by a pair, $(\T, \R)$, where $\T$ is a node-weighted tree rooted at a node $r$ and $\R$ is a set of requests that arrive at the nodes of $\T$ over some time period. A request $\rho = (v, a, d)$ is specified by an arrival time $a$, a deadline $d$, and a node at which the request is issued, $v \in \T$. Without loss of generality, we can assume that the deadlines of each request are distinct and that the cost of every node is positive. We define $D$ to the height of the tree plus one or, equivalently, the maximum number of nodes present in a path in $\T$. We call $D$ the depth or number of levels of the tree. Then, we define the level of a node, $L_v$, to be the number of nodes on the unique $r$ to $v$ path in $\T$. 

A service is a pair $(S, t)$ where $S$ is a sub-tree of $\T$ containing the root and $t$ is the time that $S$ is transmitted. Notice the definition of a service implies that if $v \in S$ and $u$ is an ancestor of $v$, then $u \in S$. We refer to this fact as the \textit{subtree property} of services. We sometimes refer to the subtree $S$ as the service when the time of the service is clear or irrelevant to the given context. A request is satisfied by a service, $(S,t)$, if $v \in S$ and $a \leq t \leq d$. A schedule is a set of services $(S_1,t_1),\ldots, (S_k,t_k)$. A schedule is a solution to MLAPD if every request in $\R$ is satisfied by some service in the schedule. The cost of a schedule is $\sum_{i = 1}^k c(S_i)$ where $c(S_i) = \sum_{u \in S_i} c(u)$ is the total cost of nodes in the service. Note that this node-weighted definition is equivalent to the original definition of the problem as shown by \cite{D}.

In the online setting, $\T$ is known to the algorithm but the request arrive in an online fashion. In particular, if $\rho = (v,a,d) \in \R$ is a request, then the algorithm receives this request at time $a$. At time $t$, the algorithm must decide whether to transmit a service and if so decide what nodes to include in the service given only knowledge of requests whose arrival time are before or at time $t$. The algorithms we consider only keep track of active requests, i.e. requests that have arrived yet have not been served previously. 

For a node $v$, we let $\T_v$ denote the subtree of $\T$ rooted at $v$. Given a request $\rho$ and a time $t$, we use the notation $\rho \in \T_v^t$ to mean that $\rho$ was issued at a node in $\T_v$ by time $t$ and has not yet been satisfied. Then, if $\rho \in \T_v^t$, we define $P(v \to \rho)$ to be the nodes on the path in $\T_v$ from $v$ to the node at which $\rho$ is issued. For the root, we use the special notation $P_{\rho}$ to be $P(r \to \rho)$. If $S$ is any subtree, then we define $P(v \to \rho | S) = P(v \to \rho) \setminus S$ to be the nodes on the path to $\rho$ in $\T_v$ that are not already in $S$. We also define $c(v \to \rho | S) = c(P(v \to \rho | S))$ to be the cost of the nodes of the path to $\rho$ that are not in $S$.

%\paragraph{$L$-decreasing trees.} An $L$-decreasing tree satisfies the property that along any root to node path, the cost of a node is at least a $L$-factor smaller than the cost of the ancestor of the node on the path. In particular, if $u$ is an ancestor of $v$ in $\T$, we have that $c(u) \geq L c(v)$. This definition is merely a relaxed version of HSTs. It is shown in \cite{OG} that an $R$-competitive algorithm for \textsc{MLAPD} on $L$-decreasing trees implies the existence of a $DLR$-competitive algorithm for \textsc{MLAPD} on general trees using a reduction. For this reason, may algorithmic approaches to \textsc{MLAPD} focus on the case of $L$-decreasing trees. We similarly define $L$-increasing trees to be trees satisfying the property that $c(v) \geq L c(u)$ whenever $u$ is an ancestor of $v$ in $\T$.

\paragraph{Critically overdue.} It is known that there is always an optimal schedule that only transmits at times that are deadlines of requests. Given such an optimal schedule and a schedule of some algorithm, we can define a notion of a node being late. Suppose $(S,t)$ is a service of the algorithm in the given schedule and that $v \in S$. Then, we define $d^t(v) = \min_{\rho \in \T_v^t} d_{\rho}$ to be the deadline of the earliest request in $\T_v^t$, which we call the deadline of $v$ or the urgency of $v$. Also, we define $\nos{v}{t}$ to be the time of the next service of the optimal schedule that includes $v$ and is transmitted at a time strictly after time $t$. Note, we sometimes also use $\nos{v}{t}$ to reference this service of the optimal schedule instead of the time at which it was transmitted. If no other services of the optimal schedule include $v$ at time $t$ or later, then we define $\nos{v}{t} = \infty$. Now, we say that a node $v$ is late at time $t$ if $d^t(v) < \nos{v}{t}$. This is so called since the optimal schedule must have already satisfied the earliest due request in $v$ as the next time this schedule includes $v$ this request will have already been due. Next, we say $v$ is \textit{critically overdue} at time $t$ if $v$ is late at time $t$ and if there is no service of the algorithm's schedule including $v$ that is transmitted at a time in $(t,\nos{v}{t})$ \cite{OG}. We often will say the pair $(v,t)$ is critically overdue to mean $v$ is critically overdue at time $t$.

\section{Special Tree Classes}

In this section we present new upper bounds on the competitive ratio for MLAPD when restricted to special classes of graphs. The arguments in this section serve as a warm up for the main argument in the next section. Also, the new upper bound for path instances shows that the known lower-bound of $4$ on the competitive ratio does not hold for finite graphs.

\subsection{Increasing Trees}

An increasing tree is a tree where weights continually increase as one travels away from the root down any path in the tree. We consider an algorithm that serves only the path to a due request. In particular, whenever a request becomes due, the algorithm transmits exactly the path to the due request and nothing else. Hence, every service is of the form $(P_{\rho}, d_{\rho})$. We call this algorithm \textsc{Noadd} since it never adds any nodes to a service other than the bare minimum.

\begin{theorem}
\textsc{Noadd} is $D$-competitive for any Increasing Tree of depth $D$. 
\end{theorem}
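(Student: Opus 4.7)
The plan is to use a direct charging argument. Every algorithm service is of the form $(P_v, d_\rho)$, triggered by a request $\rho$ at node $v$ that is still active at its deadline $d_\rho$. For each such triggering request, fix an optimal service $O^* = (S^*, t^*)$ that serves $\rho$; by the subtree property $v \in S^*$ and hence $P_v \subseteq S^*$. I would charge the full cost $c(P_v)$ of the algorithm service to this optimal service and show every optimal service receives total charge at most $D \cdot c(S^*)$.

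The first key step is to argue that distinct triggering requests charged to the same optimal service live at distinct nodes. Suppose triggering requests $\rho_i$ and $\rho_j$, say with $d_{\rho_i} < d_{\rho_j}$, are both served by the same $O^* = (S^*, t^*)$ and both are issued at the same node $v$. Then $a_{\rho_j} \leq t^* \leq d_{\rho_i}$, so $\rho_j$ is active at time $d_{\rho_i}$. Hence the algorithm's service $(P_v, d_{\rho_i})$ would already satisfy $\rho_j$, contradicting the assumption that $\rho_j$ triggers its own later service. So the triggering nodes charged to a fixed $O^*$ form a set of distinct nodes $\{v_1, \ldots, v_k\} \subseteq S^*$.

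The second key step is to exploit the increasing tree hypothesis: along any root-to-leaf path the node costs are nondecreasing, so the deepest node on $P_{v_j}$ is the most expensive. Hence
\[
c(P_{v_j}) \;=\; \sum_{u \in P_{v_j}} c(u) \;\leq\; L_{v_j} \cdot c(v_j) \;\leq\; D \cdot c(v_j).
\]
Summing over the triggering nodes charged to $O^*$ and using that they are distinct elements of $S^*$,
\[
\sum_j c(P_{v_j}) \;\leq\; D \sum_j c(v_j) \;\leq\; D \cdot c(S^*).
\]
Summing over all optimal services then yields $\textsc{Noadd} \leq D \cdot \mathrm{OPT}$.

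I expect the main subtlety to be the distinctness step, since it is the place where the dynamics of \textsc{Noadd} (each triggered service automatically covering later co-located requests that opt batches together) must be used carefully. The increasing-weight hypothesis is then precisely what converts the naive bound $c(P_{v_j}) \leq c(S^*)$, which alone would give only a useless $k \cdot c(S^*)$, into a charge proportional to $c(v_j)$ that telescopes cleanly against $c(S^*)$.
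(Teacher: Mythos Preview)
Your proposal is correct and is essentially the same charging argument as the paper's. The paper phrases it as charging each algorithm service to the node-occurrence $(v,O)$ and showing that each such occurrence is charged at most once with ratio at most $L_v\le D$; you phrase it as charging to the whole optimal service and showing the triggering nodes hitting a fixed $O^*$ are distinct, but the underlying injectivity and the use of the increasing-tree inequality $c(P_v)\le L_v\,c(v)$ are identical.
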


\begin{proof}

We use a simple charging argument. Suppose $\rho$ is a request that causes a service of ALG at time $t = d_{\rho}$. Further, suppose that $v$, the node at which $\rho$ was issued, is at level $L$ and $r = v_1, \ldots,v_L = v$ are the nodes on the path $P_{\rho}$. Then, by definition, we know that at the time of its satisfaction by ALG, $\rho$ is due and so must have been served by OPT by this time. Suppose $(O,t')$ was the service of OPT that satisfied $\rho$. We charge the entire cost of this service of ALG just to the occurrence of $v$ appearing in $O$. We note that $\sum_{i = 1}^L c(v_i) \leq \sum_{i = 1}^Lc(v) = Lc(v)$ since the tree is increasing. Then, the charge to $v$ is exactly $\frac{Lc(v)}{c(v)} = L$. Since $L \leq D$, the total charge to $v$ is at most $D$. Also, since all the requests at $v$ are cleared after this transmission, the next time a request issued at $v$ becomes due must be strictly after the current time $t$. By the same argument, at that time $t'$ a new service of OPT must have been transmitted that contains $v$ and it will be this occurrence of $v$ that is charged next. Thus, all the charges are disjoint and so the charge to any node of OPT is at most $D$. Thus, ALG is $D$-competitive.

\end{proof}

Note that the argument above is actually proving $v$ is critically overdue at time $d_{\rho}$. Then, the charging scheme boils down to charging critically overdue nodes. We can use the same ideas to get an even better result for $L$-increasing trees. $L$-increasing trees simply satisfy the property that if $u$ is an ancestor of $v$ then $c(v) \geq L c(u)$.

\begin{corollary}
\textsc{Noadd} is $\frac{L}{L-1}$-competitive for any $L$-increasing Tree of depth $D$. 
\end{corollary}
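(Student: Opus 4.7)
The plan is to reuse the charging argument from the theorem almost verbatim, replacing the worst-case bound $\sum_{i=1}^{L_v} c(v_i) \leq L_v\, c(v)$ with a geometric bound coming from the $L$-increasing property. Specifically, when a request $\rho$ triggers a service at time $t = d_\rho$ at a node $v$ of level $L_v$, with ancestors $r = v_1,\ldots,v_{L_v} = v$, the $L$-increasing assumption gives $c(v_i) \leq L^{-(L_v - i)}\, c(v)$, so I would bound
\[
c(P_\rho) \;=\; \sum_{i=1}^{L_v} c(v_i) \;\leq\; c(v) \sum_{j=0}^{L_v-1} L^{-j} \;<\; \frac{L}{L-1}\, c(v).
\]

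Next I would charge the full cost of this \textsc{Noadd} service to the occurrence of $v$ in the service $(O,t')$ of OPT that satisfies $\rho$, which exists because $\rho$ has already reached its deadline. The resulting charge ratio is at most $\frac{L\, c(v)/(L-1)}{c(v)} = \frac{L}{L-1}$.

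For disjointness of charges, I would copy the argument of the theorem: after \textsc{Noadd} clears all requests at $v$ at time $t$, the next time a request at $v$ becomes due is strictly later than $t$, and before that later time OPT must again include $v$ in a new service; this new occurrence is what gets charged at the next triggering of a request through $v$. Hence distinct services of \textsc{Noadd} that include $v$ are charged to distinct occurrences of $v$ in OPT, so the total charge per OPT node is at most $\frac{L}{L-1}$, yielding the claimed competitive ratio.

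I do not expect a real obstacle here: the only substantive change from the theorem is switching from the trivial bound $L_v c(v)$ to the geometric-series bound, which is immediate from the definition of $L$-increasing, and this bound is uniform in $D$ so depth plays no role in the final ratio.
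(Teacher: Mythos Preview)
Your proposal is correct and essentially identical to the paper's own proof: both refine the charging argument of the theorem by replacing the trivial bound $L_v\, c(v)$ with the geometric-series bound $\sum_{j\ge 0} L^{-j}\, c(v) \le \frac{L}{L-1}\, c(v)$ obtained from the $L$-increasing property, and both reuse the disjointness-of-charges argument unchanged.
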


\begin{proof}

We use the same argument as for increasing trees but refine the charge to any node. To avoid confusion we use $\ell$ to mean the level and $L$ to be the factor of increase. With the same setup as before, the only change is the total charge to $v$. Note that $c(v) \geq L c(v_{\ell-1}) \geq L^2 c(v_{\ell-2}) \geq \ldots \geq L^{\ell-1} c(v_1)$ since the tree is $L$-increasing. In particular, $c(v_i) \leq L^{i - \ell} c(v)$. Then,
$$\sum_{i = 1}^{\ell} c(v_i) \leq \sum_{i = 1}^{\ell} L^{i-\ell} c(v) = c(v)\sum_{i = 0}^{\ell-1} L^{-i} = \frac{L - L^{1-\ell}}{L-1}c(v) \leq \frac{L}{L-1}c(v)$$ 
So, the total charge to $v$ is $\frac{L}{L-1}$. The rest of the argument goes through as before. Thus, ALG is $\frac{L}{L-1}$-competitive.

\end{proof}

%Overall, we see that solving \textsc{MLAPD} on increasing trees is straightforward and becomes easier the faster the costs of the nodes grow. Similarly, the problem for $L$-decreasing trees seems to get easier the faster the node costs shrink as it becomes more feasible to add many nodes to a service. Despite this similarity, $L$-decreasing trees do not seem to yield small constant competitive ratios. In fact, solving the problem on $L$-decreasing trees may be as hard as solving the problem in general.

\subsection{Paths}\label{section: paths}

In our context, we consider a path to simply be a path graph where one of the endpoints is the root of the tree. Consider the algorithm, \textsc{Double}, that upon a request $\rho$ becoming due, first adds the subpath from the root to the node at which $\rho$ was issued, $P_{\rho}$. Then, the algorithm continually tries to extend the current subpath containing the root by adding other subpaths that connect the last node in the current subpath to nodes holding urgent requests. If adding a subpath would cause the resultant service to cost more than $2c(P_{\rho})$, then the path is not added and the algorithm transmits the current service. This iterative procedure is described by \cref{alg: double}. 
%Alternatively, suppose $\gamma_1$ is the earliest due request satisfying $c(r \to \gamma_1 | P_{\rho}) \leq c(P_{\rho})$ but $c(r \to \gamma_2 | P_{\rho}) > c(P_{\rho})$ where $\gamma_2$ is the next earliest due request after $\gamma_1$. Then, the service constructed will be exactly $(P_{\rho} \cup P(r \to \gamma_1 | P_{\rho}), d_{\rho})$.

\begin{algorithm}[H]
\caption{\textsc{Double}}\label{alg: double}
\SetAlgoLined
\KwData{Request $\rho$ just became due}
\KwResult{A subtree is transmitted that serves $\rho$ }
 Initiate a service $S = P_{\rho}$.\\
 \While{there is an unsatisfied request}{
  Let $\gamma$ be the earliest due request not yet satisfied by $S$\\
  \If{$c(S) + c(r \to \gamma | S) > 2c(P_{\rho})$}{
  break;
  }
  $S \gets S \cup P(r \to \gamma | S)$\\
 }
\textbf{transmit} $S$
\end{algorithm}

\begin{theorem}\label{theorem: paths}
\textsc{Double} is $(4 - 2^{-D})$-competitive for any path with depth $D$.
\end{theorem}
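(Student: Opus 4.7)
The plan is to prove the bound by a charging argument, adapting the critically-overdue intuition to the budgeted greedy extension of \textsc{Double}. For each ALG service $A$ triggered by a request $\rho_A$, set $c_A := c(P_{\rho_A})$; the break condition in \cref{alg: double} guarantees $c(A) \leq 2 c_A$. I would charge each $A$ to the unique OPT service $O_A$ that serves $\rho_A$; since $O_A$ must contain $P_{\rho_A}$, we have $c(O_A) \geq c_A$. It then suffices to prove that for each OPT service $O$, the total cost of ALG services charged to $O$ is at most $(4 - 2^{-D}) c(O)$.

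Fix an OPT service $O$ and list the ALG services charged to $O$ in time order as $A_1, \dots, A_s$, with triggering requests $\rho_1, \dots, \rho_s$ at levels $m_1, \dots, m_s$. Since $O$ serves every $\rho_u$ we have $a_{\rho_u} \leq t_O \leq \tau_{A_1}$, and because no earlier ALG service could have satisfied $\rho_u$ (otherwise $A_u$ would not be triggered), each $\rho_u$ is active at time $\tau_{A_1}$. A short induction then shows the strict monotonicity $m_1 < m_2 < \cdots < m_s$: if some earlier ALG service reached level at least $m_u$, it would already have served the still-active $\rho_u$, a contradiction. In particular $s \leq D$, and $c_u \leq c(O)$ for each $u$.

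The key structural claim is the doubling inequality $c_{u+1} > 2 c_u$ for every $u < s$. Because $\rho_{u+1}$ is active at $\tau_{A_u}$ and its level exceeds the final level of $A_u$, the while-loop must have exited via the \texttt{break}, producing a candidate $\gamma$ whose addition would bring the service cost to $c(P_\gamma) > 2 c_u$. If $\gamma = \rho_{u+1}$ the claim is immediate. Otherwise $d_\gamma < d_{\rho_{u+1}} = \tau_{A_{u+1}}$, so ALG must serve $\gamma$ at some intermediate service $B$ strictly between $A_u$ and $A_{u+1}$; the crucial observation is that $B$'s level must be strictly below $m_{u+1}$ (else $B$ would prematurely serve the still-active $\rho_{u+1}$ and prevent $A_{u+1}$ from existing), forcing the level of $\gamma$ to be at most $m_{u+1} - 1$ and so $c(P_\gamma) \leq c_{u+1}$. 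This intermediate-service case, where the \textquotedblleft blocking candidate\textquotedblright\ is not $\rho_{u+1}$ itself but some unrelated request served elsewhere, is the main obstacle of the proof, and the level-bound on $B$ is what rescues the argument.

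Iterating the doubling gives $c_u < c_s / 2^{s-u}$, so a geometric sum yields $\sum_u c_u < c_s (2 - 2^{1-s})$. Combining with $c(A_u) \leq 2 c_u$, $c_s \leq c(O)$, and $s \leq D$ produces
\[
\sum_u c(A_u) \;<\; (4 - 2^{2-s})\, c(O) \;\leq\; (4 - 2^{-D})\, c(O).
\]
Summing over all OPT services then yields $c(\textsc{Double}) \leq (4 - 2^{-D})\, c(\text{OPT})$, completing the proof.
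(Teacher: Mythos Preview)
Your proof is correct and follows the same overall strategy as the paper: charge each ALG service to the OPT service that serves its triggering request, then bound the total charge to any fixed $O$ via a doubling argument on the triggering-path costs $c_u$. The organization differs in two respects. First, the paper splits the services charged to $O$ into a single ``large'' service with $c(S)\ge c(O)$ (charged $2$) and the remaining ``small'' services (charged $2-2^{-k}$ via \cref{lemma: doubling}); you treat all $s$ services uniformly, which is cleaner and in fact yields the slightly sharper constant $4-2^{2-s}\le 4-2^{2-D}$, though you only claim the weaker $4-2^{-D}$. Second, and more substantively, your handling of the doubling step is more careful: when the blocking candidate $\gamma$ at the break of $A_u$ is \emph{not} $\rho_{u+1}$, you argue explicitly that the ALG service $B$ eventually satisfying $\gamma$ must stay strictly below level $m_{u+1}$ (else it would prematurely serve $\rho_{u+1}$), forcing $c(P_\gamma)\le c_{u+1}$. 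The paper's proof of \cref{lemma: doubling} asserts the corresponding inequality ``by definition of ALG'' without isolating this case, so your version makes the argument more transparent.
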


\begin{proof}
We will use a charging argument that charges the cost of several services of ALG to a single service of OPT. We visualize each subpath as an interval going from the root to another node. Then, this strategy boils down to charging sequences of nested intervals of ALG to a single interval of OPT. 

Let $(S,d) = (P_{\rho} \cup A, d_{\rho})$ be a service of ALG. Since $\rho$ is due at time $d_{\rho}$, OPT must have already satisfied $\rho$ at some earlier time $t$. If $(O,t)$ is the service of OPT satisfying $\rho$, then we know that $P_{\rho} \subseteq O$ by the subtree property of services. Also, the subtree property and the fact that the tree is a path implies that either $O \subseteq S$ or $S \subseteq O$. Since each cost is positive, this corresponds to $c(O) \leq c(S)$ in the first case and $c(S) \leq c(O)$ in the second case. 

\begin{itemize}
    \item Suppose that $c(S) \geq c(O)$ so that ALG's interval contains OPT's interval. By definition of ALG, we have that $c(A) \leq c(P_{\rho})$. Thus, we charge $(S,d_{\rho})$ to $(O,t)$ at a total charge of
    $$c(S) = c(P_{\rho}) + c(A) \leq 2c(P_{\rho}) \leq 2c(O) $$
    Here, we used the fact that $P_{\rho} \subseteq O$ and so $c(P_{\rho}) \leq c(O)$. 
    We claim that no other service $(S',d') = (P_{\gamma} \cup B, d_{\gamma})$ of ALG satisfying $c(S') \geq c(O)$ will be charged to $(O,t)$. For the sake of contradiction, suppose that $(S',d')$ and $(S,d)$ are both charged to $(O,t)$. WLOG, we assume that $d_{\gamma} > d_{\rho}$. Note, it must be the case that $\gamma$ and $\rho$ both arrive before time $t$ for $O$ to satisfy them. Thus, when $S$ was constructed, $\gamma$ had already arrived. Since $\gamma$ triggered a service, we know it was not satisfied by $S$. Hence, the stopping conditions of ALG imply that $S_i \subset P_{\gamma}$. But, $S$ containing $O$ then implies that $O \subseteq S \subset P_{\gamma}$ so $O$ could not have satisfied $\gamma$, a contradiction. Thus, only one service of ALG with larger cost may be charged to any service of OPT.

    \item Suppose that $c(S) < c(O)$ so that ALG's interval is strictly contained in OPT's interval. We will pay for all such $S$ at once. Suppose that $\rho_1,\ldots, \rho_k$ are requests in order of increasing deadline that are all satisfied by $(O,t)$ and that each trigger a service of ALG having smaller service cost than $c(O)$. Let $(S_i, t_i) = (P_{\rho_i} \cup A_i, d_{\rho_i})$ be the services of ALG triggered by $\rho_i$. By \cref{lemma: doubling} we know that 
    $$\sum_{i = 1}^k c(S_i) <  (2 - 2^{-k})c(O)$$
    and so the charge to $O$ for all these intervals is $2 - 2^{-k}$. 
\end{itemize}

Overall, we see any service of OPT is charged at most $2$ larger services of ALG and at most $2-2^{-k}$ for all the smaller services of ALG. Thus, the total charge made to any service of OPT is at most $2 + (2-2^{-k}) = 4 - 2^{-k}$. Hence, the algorithm is $4 - 2^{-k}$-competitive, where $k$ is the maximal number of smaller services that we charge to any service of OPT. A trivial upper bound for $k$ is $D$ as each interval must differ by at least one node. Thus, ALG is $(4 - 2^{-D})$-competitive.

\end{proof}

\begin{lemma}\label{lemma: doubling}
Suppose that $(O,t)$ is a service of OPT and $\rho_1,\ldots, \rho_k$ are requests in order of increasing deadline that are satisfied by $(O,t)$ and that trigger services $(S_1,t_1),\ldots, (S_k,t_k)$ of ALG each having smaller service cost than $c(O)$. Then,
$$\sum_{i = 1}^k c(S_i) <  (2 - 2^{-k})c(O)$$
\end{lemma}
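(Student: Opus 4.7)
The plan is to show that the quantities $p_i := c(P_{\rho_i})$ grow geometrically at a rate strictly greater than $2$, and then to sum using a geometric series. Two structural inequalities drive the argument.

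First, I would show $p_{i+1} > 2 p_i$ for every $i < k$. Since $\rho_{i+1}$ is served by $(O,t)$ with $t \leq t_i$, it is already active at $t_i$, and since $\rho_{i+1}$ triggers $S_{i+1}$ it is not satisfied by $S_i$. Because $\T$ is a path, this forces $S_i \subsetneq P_{\rho_{i+1}}$. Examining the \textsc{Double} loop at $t_i$, the algorithm must break at some request $\gamma$, and the break condition simplifies (using the path structure so that $c(S) + c(r \to \gamma|S)=c(P_\gamma)$) to $c(P_\gamma) > 2 p_i$. The subtle point is that $\gamma$'s node cannot lie strictly deeper than $\rho_{i+1}$'s: if it did, then since $\gamma$ is processed first by \textsc{Double} we would have $d_\gamma < d_{\rho_{i+1}}$, so ALG's service triggered by $\gamma$ at $d_\gamma$ would contain $P_\gamma \supseteq P_{\rho_{i+1}}$ and serve $\rho_{i+1}$ before $t_{i+1}$, contradicting the hypothesis that $\rho_{i+1}$ triggers $S_{i+1}$. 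Hence $c(P_\gamma) \leq p_{i+1}$ and $p_{i+1} > 2 p_i$.

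Second, I would establish the boundary inequality $p_k < c(O)/2$. Since $c(S_k) < c(O)$ and both are rooted path-prefixes, $S_k \subsetneq O$. By the optimality of OPT one may take $O$ to be minimal, so its deepest node is that of some request $\sigma$ served by $O$, strictly deeper than $\rho_k$'s, with $c(O) = c(P_\sigma)$. Rerunning the break argument at $t_k$ with $\sigma$ in the role of $\rho_{i+1}$ yields $c(O) > 2 p_k$, provided we can rule out a break on some $\gamma$ strictly deeper than $\sigma$. If such a $\gamma$ existed, OPT would be forced to serve $\gamma$ via some service $\tilde{O}$ containing $P_\gamma \supseteq P_\sigma$; using $a_\sigma \leq t \leq t_k < d_\gamma$ and $a_\gamma \leq t_k \leq d_\sigma$, the intervals $[a_\gamma,d_\gamma]$ and $[a_\sigma,d_\sigma]$ overlap, so OPT could schedule $\tilde{O}$ inside $\sigma$'s window and have it also serve $\sigma$, letting $O$ be shrunk below $P_\sigma$ and strictly reducing cost --- contradicting optimality. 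Hence no such $\gamma$ exists, and $c(O) > 2 p_k$.

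Combining the two inequalities, $p_i < c(O)/2^{k-i+1}$ for each $i$, so by the doubling stopping rule of \textsc{Double} we get $c(S_i) \leq 2 p_i < c(O)/2^{k-i}$. Summing the resulting geometric series,
$$\sum_{i=1}^k c(S_i) \;<\; c(O)\sum_{j=0}^{k-1} 2^{-j} \;=\; c(O)\bigl(2 - 2^{1-k}\bigr) \;<\; \bigl(2 - 2^{-k}\bigr)c(O),$$
which is the claimed bound. The hard part will be the second step: the timing-based rule-out of a deeper break at $t_k$ via OPT's optimality is the most delicate ingredient, while the first step and the closing geometric summation are routine once this structural picture is in place.
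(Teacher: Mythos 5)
Your overall plan---geometric doubling $p_{i+1}>2p_i$, a boundary bound $p_k < c(O)/2$, then a geometric sum---is exactly the paper's skeleton, and your Step~1 is actually done \emph{more} carefully than the paper: you explicitly rule out the break request landing strictly deeper than $\rho_{i+1}$ by observing that such a request would trigger a service before $t_{i+1}$ that already covers $\rho_{i+1}$'s node. That is the right argument and the paper leaves it implicit. Step~3 is routine and correct.

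The problem is Step~2, and it is a genuine gap. You derive $p_k < c(O)/2$ by ruling out a break at $t_k$ at a request $\gamma$ strictly deeper than $O$'s deepest node $v$, and the rule-out is an ``OPT could have done better'' argument: reschedule $\tilde{O}$ (the OPT service covering $\gamma$) into the overlap of $\gamma$'s and $\sigma$'s windows, then shrink $O$ below $v$. This is not a valid exchange argument as stated. Moving $\tilde{O}$ to time $t_k$ may cause $\tilde{O}$ to stop covering other requests it was serving at its original time (their windows need not contain $t_k$), and shrinking $O$ may orphan requests other than $\sigma$ that were resting on $v$ (or on nodes between $S_k$'s deepest node and $v$) being in $O$. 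You would need to show that every request dropped by these two edits is picked up by some remaining service, and that the net cost strictly decreases; neither is argued, and neither is obvious. Also note that, unlike Step~1, you cannot reuse the ``$\gamma$ would trigger an earlier service'' trick here: $\gamma$ does become due after $t_k$ and does trigger a service of ALG, but that service has cost $\geq c(O)$ and is simply not one of the $S_i$, so no contradiction follows. The paper's own treatment of this boundary case (invoking ``the same arguments from before'' for the request at $v$) glosses over the same deeper-break possibility, so you have put your finger on a real subtlety; but your proposed repair via modifying OPT does not close it. To fix Step~2 you would need either a careful, case-complete exchange argument with feasibility verified, or a different route to the boundary bound that, like the paper's, does not lean on OPT's optimality.

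One small point: you assert $t_k \leq d_\sigma$ without justification. It does hold, but only via the chain: $\rho_k$ is pending at $t_k$, so no ALG service in $[t,t_k)$ contains $\rho_k$'s node, hence none contains the deeper node $v$, hence $\sigma$ is still pending at $t_k$ and thus $d_\sigma > t_k$ by distinctness of deadlines. That reasoning should be spelled out, since it is also what lets you run the break argument at $t_k$ with $\sigma$ at all.
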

We defer the proof of \cref{lemma: doubling} to section~\ref{sec: pathappendix} of the appendix. In fact, the analysis above can be refined to give a $(4 - 2^{1 - \frac{D}{2}})$-competitive ratio. The proof of this stronger result can also be found in section~\ref{sec: pathappendix} of the appendix.

\begin{corollary}
The competitive ratio lower bound of $4$ does not hold for finite paths. Thus, the true competitive ratio for MLAPD for path instances is either at most $3$ or non-integer. 
%In fact, the lower bound of $4$ does not hold for infinite graphs unless the request sequence is infinite and contains a subsequence of requests such that the corresponding subsequence of distances from the issued nodes to the root are increasing and unbounded.
\end{corollary}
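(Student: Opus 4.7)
The plan is a direct consequence of \cref{theorem: paths} together with an elementary numerical observation. First, I would invoke \cref{theorem: paths} with the key observation that for \emph{any} finite path of depth $D$, the quantity $4 - 2^{-D}$ is strictly less than $4$, since $D$ is finite implies $2^{-D} > 0$. Thus the algorithm \textsc{Double} witnesses that no finite path instance can force a competitive ratio as large as $4$. This immediately establishes the first sentence of the corollary: the previously known lower bound of $4$ was proved in \cite{OG} using a path of unbounded length, and the bound $4 - 2^{-D}$ certifies that this matching lower bound cannot be replicated on any single finite path.

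For the second sentence, I would let $c^*$ denote the best achievable competitive ratio for MLAPD restricted to path instances. From the first part, $c^* \leq 4 - 2^{-D}$ for every finite $D$, and in particular $c^* < 4$ whenever one fixes attention to paths of any bounded depth. Since the only positive integers strictly less than $4$ are $1, 2, 3$, a simple case split suffices: if $c^*$ happens to be an integer, then $c^* \in \{1, 2, 3\}$ and so $c^* \leq 3$; otherwise $c^*$ is non-integer. Either alternative is exactly what the corollary claims.

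The argument is essentially an immediate unpacking of \cref{theorem: paths}, so there is no substantive mathematical obstacle; the only point requiring care is the \emph{interpretation} of ``true competitive ratio for MLAPD for path instances.'' The natural reading is that one is excluding exactly the integer value $4$ (and larger integers), which is justified since \cref{theorem: paths} rules out a competitive ratio of $4$ or more on every finite instance. I would therefore keep the write-up short, making this interpretation explicit and then deducing the dichotomy in one or two lines, with a brief remark that the strict inequality $4 - 2^{-D} < 4$ for all finite $D$ is what drives the conclusion.
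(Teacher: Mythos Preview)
Your proposal is correct and takes essentially the same approach as the paper: the paper's own proof is the single line ``The claims are immediate as $4 - 2^{-D} < 4$ for any finite $D$,'' which is exactly the numerical observation you invoke. Your explicit unpacking of the integer/non-integer dichotomy for the second sentence is more detailed than the paper's (which leaves it implicit), but the underlying argument is identical.
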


\begin{proof}
The claims are immediate as $4 - 2^{-D} < 4$ for any finite $D$. 
%So long as $k$ is finite the upper bound given is strictly less than $4$. Only infinitely many distinct nodes having requests could yield non-finite $k$ by the definition of $k$. Thus, a sequence of nodes with increasing distance from the root are needed to cause $k$ to go to infinity. Alternatively, this can be seen since if all nodes having requests have a bounded distance from the root, we can truncate the path to the node of longest distance. Hence, we get an equivalent instance on a finite path by truncation, and then can apply \cref{theorem: paths} directly.
\end{proof}

\section{Arbitrary Trees}

In this section, we present a $D$-competitive algorithm for MLAPD on arbitrary trees of depth $D$. 

%To devise such an algorithm, there are a few important considerations. The main one is whether we prefer eager or lazy aggregation. We attempt to have a balance of both. In particular, we allow nodes already included in a service to include others, but we give them a set budget so that not too many nodes are added. When adding early requests, we must decide whether a local or global approach is preferable. Global approaches use a node's budget to satisfy the most urgent requests regardless if they are located near the node in question. These approaches maximally deal with the time requirement parts of the problem. On the other hand, local approaches look to satisfy requests that share a large fraction of the path from the root to the given node in order to maximize the power of aggregation. In a sense, local strategies are desirable since if we already have paid for a large cost path, we would regret having to pay for that path again in the future rather than taking advantage of that path now to satisfy nearby requests. Though, it turns out aspects of both local and global strategies are needed.

\paragraph{Intuition.} Since the path to a due request must always be served, the main task is to decide what other nodes will be added to the service. The main observation is that several of these nodes have been included in ALG unnecessarily in the past. So, we need to justify their inclusion in the current service. For a node $v$, we do this by satisfying requests in $\T_v$ using a procedure called a $v$-fall. This way if $v$ was indeed serviced too many times before, we have other nodes that can effectively be used to pay for $v$'s cost in the analysis. Repeating this logic, we need to recursively justify all the nodes added by the $v$-fall, which will be nodes of higher level than $v$. Eventually, we find good nodes of low level, or this process reaches the leaves, which will always have the desired properties. 

Clearly, for this to work the fall will have to include enough new nodes to pay for the cost of the given node. At the same time, adding too many nodes would runs the risk of too much eager-aggregation which would just cause more of the same problem we seek to avoid. To deal with this, we used a fixed budget that determines how many nodes can be added by the fall. However, the path to a request in $\T_v$ may have cost far greater than $v$'s budget. To avoid $v$'s budget going to waste, we discount the cost of this path so that it becomes easier to include by other falls. We treat this amount discounted as an investment that can be used to justify $v$'s inclusion. To implement this, we keep track of a price, $p(u)$, for every node that represents how much of $c(u)$ is leftover to be paid for. Then, we only need a fall to pay for the price of a path in order to include it in a service, rather than its full cost. 

Overall, the fall will add paths to requests in increasing order of deadline. Naturally, this ordering ensures we satisfy many urgent requests. Though, some requests that aren't urgent may be cheap to add and so should be considered as well. However, the fact that every node included triggers its own fall ensures that requests that aren't globally urgent but are inexpensive to include are added to the service. 

\paragraph{Falls.} Formally, given a tentative service tree $S$ at time $t$, we define a \textit{$v$-fall} to be an iterative procedure that considers requests in $\T_v^t$ for addition to the service. We define $F_v^t$ to be the set of nodes, $\mathcal{A}_v$, constructed by the $v$-fall at time $t$. We will write $P(v \to \rho)$ to mean $P(v \to \rho | S')$, where $S'$ is the tentative service tree right before this path was added to the fall $F_v^t$. If the nodes on a path to a request that are not already included in the service, $S$, can be included without the total price exceeding $v$'s budget, then the nodes are added to the service and the budget reduced by the corresponding price of these nodes. This procedure continues until there are no pending requests in $\T_v^t$ or until the addition of a path to a request exceeds the remaining budget. If a path $P(v \to \rho | S)$ exceeds the remaining budget, then we reduce the price of nodes on this path by the remaining budget. In particular, we evenly divide the remaining budget amongst the price of the path and then reduce each node's price proportional to its contribution to the total price of the path. We refer to these nodes as \textit{overflow nodes} for $F_v^t$. Whenever a fall causes some node's price to change, we think of this fall as investing in that node. This entire process is described as \cref{alg: vfall}. Note we refer to the entire process (\cref{alg: init}, \cref{alg: waterfall}, \cref{alg: vfall}) as well as just \cref{alg: waterfall} as \textsc{Waterfall}.

\paragraph{Waterfall Interpretation.}  \cref{alg: waterfall} behaves like a flow in a flow network defined by $\T$. The sources are the root and other nodes on the path to the due request and the sinks are generally the leaves. Roughly, each $v$-fall is a way of selecting which children of $v$ in the network the flow will continue through. Recursively, all such nodes also send the amount of flow it received down its subtree, and we continue the process until we reach the leaves which cannot send their flow elsewhere. Nodes may have excesses if they don't add enough in a fall, but excess is invested into the overflow nodes for their eventual inclusion. So, if we take future inclusions into account then conservation is preserved. With this flow interpretation, the algorithm behaves like a waterfall where water runs over ledges which act as containers. Whenever enough water fills up on a ledge, it spills over causing a smaller waterfall.

\

\begin{algorithm}[H]
\caption{\textsc{Initialization}}\label{alg: init}
\SetAlgoLined
\For{$v \in \T$}{
    $p(v) \gets c(v)$
}
\end{algorithm}

\

\begin{algorithm}[H]
\caption{\textsc{Waterfall}}\label{alg: waterfall}
\SetAlgoLined
\KwData{Request $\rho$ just became due}
\KwResult{A subtree is transmitted that serves $\rho$ }
 Initiate a service $S = P_{\rho}$ and set $p(v) = c(v)$ for each $v \in P_{\rho}$.\\
 Let $Q$ be a queue for nodes. Enqueue the nodes of $P_{\rho}$ into $Q$\\
 \While{$Q \not = \varnothing$}{
  Dequeue $v$ from $Q$\\
  $\mathcal{A}_v \gets F_v(S)$\\
  For every $u \in \mathcal{A}_v$ enqueue $u$ into $Q$\\
 }
 \textbf{transmit} $S$
\end{algorithm}

\

\begin{algorithm}[H]
\caption{\textsc{$v$-fall}, $F_v(S)$}\label{alg: vfall}
\SetAlgoLined
\SetKwInOut{Input}{Input}\SetKwInOut{Output}{output}
\Input{A tentative service subtree $S$}
\KwResult{The service subtree $S$ is possibly extended to satisfy some requests in $\T_v$ and the set of added nodes is returned. }
 $\mathcal{A}_v = \emptyset, b(v) = c(v)$ \\
 \For{$\rho \in \T_v$ in increasing deadline order}{
  \eIf{$p(v \to \rho | S) > b(v)$}{
    \For{$u \in P(v \to \rho | S)$}{
    $p(u) \gets p(u)(1 - \frac{b(v)}{p(v \to \rho | S)})$
    }
   break;
   }{
   $\mathcal{A}_v \gets \mathcal{A}_v \cup P(v \to \rho | S)$ \\
   $b(v) \gets b(v) - p(v \to \rho | S)$ \\
   \For{$u \in P(v \to \rho | S)$}{
    $p(u) \gets c(u)$
    }
   $S \gets S \cup \mathcal{A}_v$\\
  }
 }
 \Return{$\mathcal{A}_v$}
\end{algorithm}

\begin{theorem}\label{theorem: main}
\textsc{Waterfall} is $D$-competitive on trees of depth $D$.
\end{theorem}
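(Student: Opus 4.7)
The plan is to prove \textsc{Waterfall}'s $D$-competitiveness by a charging argument that assigns at most $D \cdot c(O)$ of ALG cost to each OPT service $(O, t')$. The scheme will generalize the critically-overdue charging from the \textsc{Noadd} proof, using the price mechanism to amortize across services and the recursive structure of falls to bound the aggregate factor by $D$.

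I would first establish the accounting identity that the total ALG cost equals the sum of all fall budgets: every service $(S,t)$ performs one $v$-fall of budget $c(v)$ per $v \in S$, and these budgets sum to $c(S)$. Thus it suffices to charge each $v$-fall's budget $c(v)$ to OPT. A $v$-fall at time $t$ is charged to the next OPT service $\nos{v}{t}$ containing $v$. When $v$ lies on the triggering request's path $P_\rho$, or is otherwise critically overdue at $t$, this mirrors the \textsc{Noadd} argument: the earliest pending request in $\T_v$ is due by $t$, so OPT must re-serve $v$ before $\nos{v}{t}$. When $v$ is added purely by an ancestor's fall and is not itself critically overdue, the charge needs to be routed differently; here I would use the overflow investments, pushing the charge upward to an ancestor whose fall was critically overdue or itself inductively chargeable.

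The next step is to partition time into phases synchronized with OPT and classify each node of each ALG service as early or late. Within a single phase and for a fixed OPT service $(O, t')$, the critically-overdue property guarantees at most one $c(v)$-charge per $v \in O$ from direct (critical) charges, giving $\le c(O)$. Indirect charges, those routed through investments, add cost from ancestor falls, but these stack along an ancestor chain in $\T$ of length at most $D$. Summing over all $D$ levels yields the $D \cdot c(O)$ bound. Critically, the top-down view (from the triggering request down through the falls) and the bottom-up view (from a charged node back up through its investors) must agree on which OPT service absorbs each unit of cost.

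The main obstacle will be the bookkeeping of overflow investments across multiple services. A single node $u$ may receive partial investments from several ancestor falls over time before being included, and each sliver of $u$'s eventual cost must be traced back to a unique OPT service without double counting. Maintaining an invariant of the form ``$\sum_v (c(v) - p(v))$ equals the total outstanding investment, each piece tagged with the OPT service that will eventually absorb it'' is the cleanest formulation, but verifying this through every kind of event (fall inclusion, overflow update, price reset), and correctly handling early nodes (whose charges shift to a future OPT service) versus late nodes (whose charges ground immediately), is the delicate part. The late/early distinction within phases, together with the recursive $v$-fall structure, is exactly the mechanism that allows each investment to be attributed to the right phase and thus the right OPT service.
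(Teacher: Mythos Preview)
Your high-level plan---phases, the late/early split, critically overdue nodes, and routing charges through the price/investment mechanism---matches the paper's framework. But the routing direction has a concrete gap.

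You split into two cases: (i) $v\in P_\rho$ or $v$ critically overdue, charged directly to OPT; (ii) $v$ added by an ancestor's fall and not critically overdue, pushed \emph{upward} to an ancestor. Neither case correctly handles nodes that are \emph{late but not critically overdue}. In case~(i), a node $v\in P_\rho$ can be late in many consecutive ALG services within a single $v$-phase; charging each occurrence directly assigns many copies of $c(v)$ to one OPT occurrence of $v$. In case~(ii), pushing upward eventually lands on some critically-overdue ancestor, but that ancestor's single critically-overdue instance in its phase must then absorb charges from arbitrarily many ALG services, and no ``ancestor chain of length~$D$'' argument bounds this. (Also, overflow investments go from $v$ to its \emph{descendants}, so they are not a mechanism for pushing charges to ancestors.)

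The paper's resolution routes late costs \emph{downward}: each late $(v,t)$ distributes its $c(v)$ along the investments its fall makes into descendants, and \cref{lemma: construction} shows this flow always terminates at a set of critically overdue descendants carrying exactly $c(v)$. The charge landing on any critically overdue $(w,t)$ is then two-sided---at most $L_w\,c(w)$ arriving from late ancestors above (\cref{lemma: investedin}) plus at most $(D-L_w)\,c(w)$ paid out to early descendants below (\cref{lemma: investmentmade})---summing to $D\,c(w)$. Your ``stack along an ancestor chain of length~$D$'' captures only the upward half of this picture; the downward routing for late non-critical nodes, and the resulting $L_w+(D-L_w)$ split, is the missing idea.

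A smaller point: you charge to $\nos{v}{t}$, but ``late'' means $d^t(v)<\nos{v}{t}$, so OPT has already served the urgent request strictly before $\nos{v}{t}$; the correct target is $\pos{v}{t}$ (and $\nos{v}{t}$ may be $\infty$).
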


The main approach to analyzing \textsc{Waterfall} will be a charging argument. %For any service of ALG, we need to pay for the nodes it includes by charging them to nodes in OPT's schedule. If a node $v$ in a service of ALG has been included in a previous service of OPT then we will charge its cost to that service of OPT. However, if that node has already been charged in all previous services of OPT, we want to avoid charging it again. In this case, we know a previous service of ALG must have also included $v$. By definition, this service would have constructed a $v$-fall and the higher level nodes present in this $v$-fall can be used to help pay for $v$ in this service. In this way, we can pay for any late nodes in a service of ALG. On the other hand, the service may have satisfied requests that OPT has not yet had a chance to satisfy. For these nodes, we charge their costs to past inclusions or to the costs of their ancestors. Since the root is always late, there always exists a late ancestor that can pay for early nodes. Overall, we will have that every service of ALG is paid for.
Whenever a node is included in a service of ALG, its cost must be paid for by charging to a service of OPT. We think of the amount invested in a node at some time as the total cost it is responsible for. If the node is critically overdue, then it can pay for its own cost and the amount invested in it. Otherwise, a fall investing a node's budget to its descendants can be interpreted as the node delegating the payment of its cost to its descendants. Then, these descendants will either be able to pay for their own costs as well as the ancestors', or will further transfer responsibility of payment to their descendants. Continuing this process, nodes that can pay off the costs will eventually be reached. In particular, the leaves of the tree will eventually be reached and these will be critically overdue. 

However, nodes in the subtree of a critically overdue node may be early and so cannot pay for themselves. The cost of such nodes will also be charged to their critically overdue ancestor. Thus, a critically overdue node will have to pay both for its ancestors (including itself) and its early descendants. This charging idea can be visualized as the evolution of debt in a family tree. The family may continually pass down debt from generation to generation. At some point, the descendants will have accumulated enough wealth to pay-off the debt of their ancestors. Though these descendants might also have children who will accumulate further debt. These unlucky descendants then would be be tasked with paying off debts from both sides, ancestors and descendants.

\subsection{ALG's Structure}

\paragraph{Phases} It will be helpful to break services of ALG into different phases. Suppose that $(O_1,t_1), \ldots, (O_j,t_j)$ are all the services of OPT that include the node $v$ in chronological order of transmission time. Then, for some $i \in [j-1]$ suppose $(A_1,t_1'),\ldots, (A_k,t_k')$ are all the services of ALG that include $v$ and satisfy $t_i \leq t_h' < t_{i+1}$ where $h \in [k]$. We call these services of ALG a $v$-phase and say that $(O_i,t_i)$ starts the $i$-th $v$-phase. For $i = j$, a $v$-phase consists of all the services of ALG containing $v$ that are transmitted at time $t_j$ or later. Note that if a request $\rho$ becomes due in the $i$-th $v$-phase, then $\rho$ must be satisfied by one of $(O_h,t_h)$ for $h \in [i]$. Consequently, if a request, $\rho$, arrives in $\T_v$ during a $v$-phase, it cannot become due before the next $v$-phase since no service of OPT includes $v$ during a $v$-phase and so cannot satisfy $\rho$. Thus, if $v$ is late at time $t$ during a phase, it must have been satisfied earlier. Also, for any service $(A,t)$ in the $i$th $v$-phase, we have that $\pos{v}{t} = t_i$ and $\nos{v}{t} = t_{i+1}$ or $\nos{v}{t} = \infty$ if $i = j$.

 Suppose $v$ is late at the time $t_i$ when $A_i$ is transmitted in the phase. Then, we have $v$ is late at all previous services in the phase. This is because ALG always considers the earliest due requests in $\T_v$ for satisfaction first. Specifically, since late requests cannot arrive in the middle of a phase, we know the request causing ALG to enter $\T_v$, $\rho_i$, must have been available in the previous services of ALG. Hence, the earliest due request in $\T_v^{t_h}$ for $h < i$, $\rho_h$, had an earlier deadline so is also overdue. Formally, $$d^{t_h}(v) = d_{\rho_h} < d_{\rho_i} < \nos{v}{t_i} = \nos{v}{t_h}$$ 
Thus, we can partition any phase into two pieces: an initial segment where $v$ is late and a later segment where $v$ is not late. We call the former services late with respect to the phase, and the latter early. Note it could be that a phase has only early or only late services. However, $r$-phases have only late services since transmissions of the algorithm only happen when a request becomes due. 

Let $v$ be a node included in a service of ALG at time $t$. Recall, we say $v$ is critically overdue at time $t$ if $v$ is late at time $t$ and if for any service of the algorithm's schedule including $v$ at a time $t' \in (t,\nos{v}{t})$ we have that $v$ is not late at time $t'$. In terms of phases, $v$ is critically overdue at time $t$ if the service of ALG at time $t$ is the last late service in its respective $v$-phase. We ``charge'' to critically overdue nodes, $(v,t)$, by charging to the occurrence of $v$ that appears in $\pos{v}{t}$. We have that charging to critically overdue nodes is injective since only one service in any phase may be critically overdue. 

Next, we discuss structural properties of late and critically overdue nodes. First, we note that if $(v,t)$ is late and the $v$-fall satisfies all requests in $\T_v^t$, then $(v,t)$ must be critically overdue. This is clear since a new request must arrive after time $t$ in order for $v$ to become late again and a new service of OPT must be transmitted to satisfy this new request.

\begin{lemma}\label{lemma: latepaths}
Suppose that $v$ is late at time $t$ and that $\rho$ is the earliest due request in $\T_v^t$. Then, for all $u \in P(v \to \rho)$, $u$ is late at time $t$. Furthermore, if $v$ is late at time $t$ and $u \in F_v^t$ is early at time $t$, then $v$ is critically overdue at time $t$.
\end{lemma}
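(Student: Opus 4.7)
Both parts hinge on the same basic observation: if $u$ is a descendant of $v$ in the tree, then every OPT service that contains $u$ must also contain $v$ by the subtree property, so $\nos{u}{t} \geq \nos{v}{t}$. I would open the proof by recording this inequality once and using it throughout.

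For the first claim, I would pick an arbitrary $u \in P(v \to \rho)$. By the definition of $P(v\to\rho)$, $u$ is on the path from $v$ to the node of $\rho$ in $\T_v$, so in particular the node of $\rho$ lies in $\T_u$, which gives $\rho \in \T_u^t$ and hence $d^t(u) \leq d_\rho = d^t(v)$. Combining with the subtree observation yields the chain
\[
d^t(u) \;\leq\; d^t(v) \;<\; \nos{v}{t} \;\leq\; \nos{u}{t},
\]
so $u$ is late at time $t$. The case $u=v$ is trivial.

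For the second claim, I would first unpack what $u\in F_v^t$ means for \textsc{Waterfall}. Since the $v$-fall processes requests of $\T_v^t$ in increasing deadline order and only adds nodes along paths that fit within the remaining budget (otherwise it breaks), $u$ being added means there is some request $\rho_i$ such that the whole path $P(v\to\rho_i\,|\,S)$ containing $u$ was added, and every earlier request $\rho_1,\dots,\rho_{i-1}$ in the fall order was also fully added. So all requests in $\T_v^t$ with deadline at most $d_{\rho_i}$ are satisfied by ALG at time $t$. Since $u$ lies on the path to $\rho_i$, we have $\rho_i\in\T_u^t$ and therefore $d^t(u)\leq d_{\rho_i}$; combining with $u$ being early and with the subtree inequality,
\[
d_{\rho_i} \;\geq\; d^t(u) \;\geq\; \nos{u}{t} \;\geq\; \nos{v}{t}.
\]

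The final step, which I view as the main obstacle, is to turn this into the claim that $v$ is critically overdue, i.e.\ that $v$ is not late at any ALG service time $t'\in(t,\nos{v}{t})$. I would split the active requests in $\T_v^{t'}$ into two groups. Requests that were already in $\T_v^t$ and remain unsatisfied must have deadline strictly greater than $d_{\rho_i}$ (since all requests with deadline up to $d_{\rho_i}$ were served at time $t$), and hence deadline at least $\nos{v}{t}$. Any new request $\rho^\ast$ arriving in $(t,t')\subseteq(t,\nos{v}{t})$ has no OPT service of $v$ available before $\nos{v}{t}$, so by the subtree property OPT cannot serve $\rho^\ast$ before $\nos{v}{t}$, forcing $d_{\rho^\ast}\geq \nos{v}{t}$. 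In both cases $d^{t'}(v)\geq \nos{v}{t} = \nos{v}{t'}$, so $v$ is not late at $t'$. This is exactly the phase characterization of critically overdue given just above the lemma, completing the proof.
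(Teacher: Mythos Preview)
Your proof is correct. Part 1 is essentially identical to the paper's argument (the paper observes $d^t(u)=d^t(v)$ rather than $d^t(u)\le d^t(v)$, but either suffices).

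For Part 2 you take a slightly different route from the paper. The paper argues by contradiction: it looks at the \emph{next} ALG inclusion of $v$ at time $t'$, assumes $v$ is late there within the same $v$-phase, shows the earliest request $\gamma\in\T_v^{t'}$ was already present at time $t$, and uses $d^t(u)<d_\gamma<\nos{v}{t}\le\nos{u}{t}$ to conclude $u$ was late at $t$, a contradiction. You instead prove directly that $v$ cannot be late at any $t'\in(t,\nos{v}{t})$: from $u$ early you extract the threshold $d_{\rho_i}\ge d^t(u)\ge\nos{u}{t}\ge\nos{v}{t}$, observe that the $v$-fall cleared every request in $\T_v^t$ with deadline at most $d_{\rho_i}$, and handle new arrivals via OPT feasibility. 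The underlying mechanics are the same---both arguments pivot on the subtree inequality $\nos{u}{t}\ge\nos{v}{t}$ and on the fact that the fall serves all sufficiently urgent requests---so this is a reorganization rather than a genuinely new idea. Your direct version has the mild advantage of treating all later $t'$ uniformly instead of focusing on the immediate successor and then appealing to the late/early partition of the phase.
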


\begin{proof}

If $v$ is late at time $t$ then, by definition, we know that $d^t(v) < \nos{v}{t}$. Since $\rho$ is the earliest due request in $\T_v^t$, we also know that $d^t(v) = d_{\rho}$. If for $u \in P(v \to \rho)$ a request $\gamma \in \T_u^t$ was due earlier than $\rho$, $\gamma$ would also be an earlier due request in $\T_v^t \supseteq \T_u^t$. Hence, we have $\rho$ is the earliest due request in $\T_u^t$, so $d^t(v) = d^t(u)$. Now, by the subtree property of services, if OPT includes $u$ in a service it must also include $v$ in that service implying that $\nos{v}{t} \leq \nos{u}{t}$. Thus, we see that $d^t(u) = d^t(v) < \nos{v}{t} \leq \nos{u}{t}$ and so $u$ is late at time $t$. 

Now, further suppose that $u \in F_v^t$ is early at time $t$. Also, suppose that ALG next includes $v$ at time $t'$. If $v$ is early at time $t'$, then either this service at time $t'$ is in a new $v$-phase or this service is an early service in the same $v$-phase. In either case, the partition of phases into earlier and late parts implies that the service at time $t$ is the last late service for the $v$-phase. Hence, $v$ is critically overdue at time $t$. 

Next, suppose that $v$ is late at time $t'$. Again, if the service at time $t'$ is in a new $v$-phase, $v$ is critically overdue at time $t$. If the service at time $t'$ is in the same $v$-phase as the service at time $t$, we show that $u$ at time $t$ was actually late. In particular, the services at times $t$ and $t'$ being in the same $v$-phase implies that $\pos{v}{t} = \pos{v}{t'}$ and $\nos{v}{t} = \nos{v}{t'}$. We claim that the earliest due request, $\gamma$, in $\T_v^{t'}$ was issued by time $t$. If not, we have that $\pos{v}{t} \leq t < a_{\gamma} \leq d_{\gamma} < \nos{v}{t}$. This means that $\gamma$ arrived and became due between two services of OPT and so OPT never satisfied this request. This is impossible by feasibility of OPT and so $\gamma \in \T_v^t$. Since $u$ was included in the $v$-fall at time $t$, but $\gamma$ was not satisfied by this $v$-fall, we know that $d^t(u) < d_{\gamma}$. But, since $d_{\gamma} = d^{t'}(v)$ and $v$ is late at time $t'$, we have
$$d^t(u) < d_{\gamma} = d^{t'}(v) < \nos{v}{t'} = \nos{v}{t} $$
Finally, applying the subtree property at time $t$, we have that $d^t(u) < \nos{v}{t} \leq \nos{u}{t}$ and so $u$ is late at time $t$. This contradicts that $u$ is early and hence $v$ must be critically overdue at time $t$.

\end{proof}

\paragraph{Direct Investments.} Suppose that $(S,t)$ is a service of ALG with $v \in S$. We say that the pair $(v,t)$ \textit{directly invests} in the pair $(w,t')$ if $w \in F_v^t$ and $t = t'$ or if $w$ is an overflow node for $F_v^t$ and $t'$ is the next time after $t$ that $w$ is included in a service of ALG. The \textit{investment} made, $I_{v}^t(w,t')$, is $p(w)$, $w$'s price during the construction of $F_v^t$, in the former case. In the latter case, the investment made is $\frac{rb(v,t)}{p(v \to \rho)}p(w)$ where $rb(v,t)$ is the remaining budget, $b(v)$, at the end of the $v$-fall's construction and $\rho$ is this earliest request not satisfied by this fall. If $rb(v,t) = 0$, then we do not consider any $0$ investments and say there are no overflow nodes for the fall. We imagine a fall including a node as reducing its price to $0$ before resetting its price to its cost. Then, we see the direct investment in any node is exactly the amount by which the fall reduces the node's price. 

\begin{lemma}\label{lemma: direct}
The total direct investment made by any pair $(v,t)$ is at most $c(v)$ and is strictly less than $c(v)$ only if $F_v^t$ satisfied all requests in $\T_v^t$. In addition, the total direct investments made to a pair $(v,t)$ is at most $c(v)$ and is strictly less than $c(v)$ exactly when $v$ is on the path to the request that triggers the service of ALG at time $t$.
\end{lemma}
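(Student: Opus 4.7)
My plan is to argue both parts by tracking the evolution of the budget $b(v)$ for Part 1 and the price $p(v)$ for Part 2, showing that in each case the sum of direct investments telescopes cleanly.

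For Part 1, I would consider the $v$-fall at time $t$, which begins with $b(v) = c(v)$, and inspect its two kinds of iterations. When the fall includes a path $P(v \to \rho | S)$, every $u$ on the path is directly invested $p(u)$ and the budget drops by $p(v \to \rho | S) = \sum_u p(u)$; so the investments add up exactly to the budget decrease. When overflow occurs on a path $P(v \to \rho | S)$, the investment to each overflow node $u$ is $\frac{b(v)}{p(v \to \rho | S)} p(u)$, and summing over $u$ gives $b(v)$, the entire remaining budget, which is then zeroed out. In both cases, the total investment made equals the total decrease in $b(v)$, so the grand total is $c(v) - rb(v,t) \le c(v)$. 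Strict inequality requires $rb(v,t) > 0$, which is possible only when the fall ended without ever overflowing, i.e., when every request in $\T_v^t$ was satisfied.

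For Part 2, I would track $p(v)$ from the last moment $v$ was included in an ALG service (or from initialization) up to time $t$. Between these moments, the only changes to $p(v)$ come from overflow investments, each of which scales $p(v)$ by a factor $(1 - b/P)$ with $b/P \in (0,1)$; hence $p(v)$ stays strictly positive, and these reductions sum to $c(v) - q$ where $q > 0$ is the price of $v$ just before the service at time $t$. By definition of direct investment, all of these pre-$t$ overflow investments are counted toward $(v,t)$. Now I would split into two cases. If $v \notin P_\rho$, then $v$ is added by some fall at time $t$; any further overflow at time $t$ drops $p(v)$ from $q$ to some $q' > 0$, contributing $q - q'$ in investments, and the fall that finally includes $v$ contributes the inclusion investment $q'$. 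These telescope: $(c(v)-q) + (q - q') + q' = c(v)$.

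If instead $v \in P_\rho$, the initialization step \textbf{resets} $p(v)$ to $c(v)$ before any fall is executed. Because $v$ is in $S$ from the outset, no fall at time $t$ can add $v$ (paths are taken modulo $S$) and no overflow path can pass through $v$, so there are no investments of either kind arriving at $(v,t)$ during time $t$. The total is therefore only the pre-$t$ overflow contributions, which equal $c(v) - q < c(v)$, yielding strict inequality. The main subtlety is being careful about what happens \emph{at} time $t$: one needs the observation that any node touched by a fall at time $t$ is absent from $S$ at that moment, which rules out both inclusion and overflow investments landing on nodes of $P_\rho$. Once that is in place, the two cases combine to give exactly the claimed characterization of when the total falls short of $c(v)$.
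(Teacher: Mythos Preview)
Your proposal is correct and follows essentially the same approach as the paper: for Part 1 you account for the budget $b(v)$ exactly as the paper does (inclusion investments equal the budget drops, overflow investments equal the residual budget, so the total is $c(v)$ unless the fall exhausted all requests), and for Part 2 you track $p(v)$ between consecutive inclusions and telescope the price reductions, distinguishing whether the final inclusion at time $t$ comes from a fall or from $P_\rho$. The only cosmetic difference is that you treat the overflow step as ``zeroing out'' the budget, whereas the paper keeps $rb(v,t)$ as the pre-overflow residual and adds it back separately; both bookkeeping choices yield the same conclusion.
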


\begin{proof}

The total direct investment made by $(v,t)$ is exactly the total price of all nodes included in $F_v^t$ plus the investments made to the overflow nodes. Note that the budget starts at $c(v)$ and is reduced by the price of each node added to the fall in each iteration. Hence, the total investment made by by $(v,t)$ to nodes included in $F_v^t$ is exactly $c(v) - rb(v,t)$. If all requests in $\T_v^t$ are satisfied by $F_v^t$, then this is all the direct investments made and this is clearly at most $c(v)$. If there are overflow nodes, each overflow node, $(u,t')$, is invested exactly $\frac{rb(v,t)}{p(v \to \rho)} p(u)$. Thus, the total investment made by $(v,t)$ to overflow nodes is exactly 
$$\sum_{u \in P(v \to \rho)} \frac{rb(v,t)}{p(v \to \rho)} p(u) = \frac{rb(v,t)}{p(v \to \rho)}\sum_{u \in P(v \to \rho)}p(u) = rb(v,t) $$
So, the total direct investments made by $(v,t)$ is $c(v) - rb(v,t) + rb(v,t) = c(v)$. Hence, as long as there are overflow nodes, the total direct investment will be exactly $c(v)$. So, the only way the total direct investment can be strictly less than $c(v)$ is if there are no overflow nodes, which can only happen if $F_v^t$ satisfied all requests in $\T_v^t$ by definition.

For the second claim, first note that every direct investment made to $(v,t)$ will be when $v$ is a overflow node other than possibly the very last direct investment as $v$ will be included in the service at that point and then cannot be invested in further by definition. Now, the direct investments made to $v$ are exactly the amounts the price of the node is reduced by. Since the price of the node begins at its cost and is reduced until the price becomes $0$ or until the node is included in a service due to a request becoming due, we see the total direct investments made to $v$ is at most $c(v)$. If $v$ is added by a fall then the last investment was its final price and so the total direct investment is simply the final price $p(v)$ plus the amounts the price was reduced by which is $c(v) - p(v)$ for $c(v)$ total direct investments. Hence, we see the total direct investment in $(v,t)$ is less than $c(v)$ exactly when it is not added by some fall. This happens if and only if $v$ was on a path to a request due at time $t$ and so the initial part of \cref{alg: waterfall} added $v$ rather than a fall. 

\end{proof}

\paragraph{Investments.} We say $(v,t)$ \textit{invests} in $(w,t')$ if either $(v,t)$ directly invests in $(w,t')$ or if $(v,t)$ invests in $(u,t'')$ and $(u,t'')$ directly invests in $(w,t')$. Equivalently, the latter case can be defined so that $(v,t)$ directly invests in $(u,t'')$ and $(u,t'')$ invests in $(w,t')$ and we will use both versions interchangeably. The equivalence follows similarly to how we can define a walk in a graph inductively by either looking at the last step or the first step in the walk. In the inductive case, we define $I_v^t(w,t') = \frac{I_v^t(u,t'')}{c(u)} \cdot I_u^{t''}(w,t')$ where one of the investments will be direct and the other will be recursively constructed depending on which definition we use. Intuitively, this is just the investment made to $(w,t')$ by $(u,t'')$ scaled by the relative reduction in $c(u)$ caused by $(v,t)$'s investment to $(u,t'')$.  

We define the \textit{total investment} in $(v,t)$ to be $I(v,t) = \sum_{\substack{(w,t') \text{ invests} \\ \text{in } (v,t)}} I_w^{t'}(v,t)$. Note, that we can calculate the total investment by focusing on direct investments. In particular, $I(v,t) = \sum_{\substack{(w,t') \text{ directly } \\ \text{invests in } (v,t)}} \frac{I(w,t')}{c(w)} I_w^{t'}(v,t)$ which follows since any pair that invests in $(v,t)$ must invest in some pair directly investing in $(v,t)$.
To streamline our arguments we will consider $(v,t)$ as investing in $(v,t)$ and the investment made is $c(v)$.
Note that only ancestors of $v$ can invest in a pair $(v,t)$.

\begin{lemma}\label{lemma: investedin}
For any pair $(v,t)$, we have $I(v,t) \leq L_v c(v)$. 
\end{lemma}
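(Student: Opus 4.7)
The plan is to induct on $L_v$. In the base case $L_v = 1$, the node $v$ is the root of $\T$, no strict ancestor exists, and the only pair investing in $(v,t)$ is $(v,t)$ itself, contributing $c(v)$ by convention. Hence $I(v,t) = c(v) = L_v c(v)$.

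For the inductive step, assume the bound holds at every strictly lower level. The first task is to rewrite
$$I(v,t) = c(v) + \sum_{(u,t'') \text{ directly invests in } (v,t)} \frac{I(u,t'')}{c(u)} I_u^{t''}(v,t),$$
where the leading $c(v)$ accounts for the self-investment of $(v,t)$ and the sum ranges over strict-ancestor direct investors. This decomposition follows from the recursive definition: every non-self investor $(w,t')$ of $(v,t)$ reaches $(v,t)$ through some direct investor $(u,t'')$ and contributes $\frac{I_w^{t'}(u,t'')}{c(u)} I_u^{t''}(v,t)$; summing over $(w,t')$ for a fixed $(u,t'')$, and including the self-contribution $I_u^{t''}(u,t'') = c(u)$, yields $\frac{I(u,t'')}{c(u)} I_u^{t''}(v,t)$.

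Each direct investor $(u,t'')$ of $(v,t)$ has $u$ a strict ancestor of $v$, so $L_u \leq L_v - 1$, and the inductive hypothesis gives $I(u,t'')/c(u) \leq L_u \leq L_v - 1$. Invoking the second half of \cref{lemma: direct} to bound $\sum_{(u,t'')} I_u^{t''}(v,t) \leq c(v)$, the decomposition yields
$$I(v,t) \leq c(v) + (L_v - 1) \sum_{(u,t'')} I_u^{t''}(v,t) \leq c(v) + (L_v - 1) c(v) = L_v c(v),$$
completing the induction.

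The main obstacle is establishing the decomposition cleanly: one must verify that each non-self investor of $(v,t)$ is captured exactly once when grouped by its direct-investor predecessor, and that the scaling identity $I_w^{t'}(v,t) = \frac{I_w^{t'}(u,t'')}{c(u)} I_u^{t''}(v,t)$ aggregates additively across that grouping to produce the factor $\frac{I(u,t'')}{c(u)}$. Once the decomposition is secured, the level comparison combined with the direct investment lemma finishes the argument.
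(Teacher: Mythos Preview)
Your proof is correct and follows essentially the same approach as the paper: induction on $L_v$, the decomposition $I(v,t) = c(v) + \sum_{(u,t'')} \frac{I(u,t'')}{c(u)} I_u^{t''}(v,t)$ over direct investors, the level bound $L_u \leq L_v - 1$ combined with the inductive hypothesis, and the invocation of \cref{lemma: direct} to cap the sum of direct investments by $c(v)$. The paper states the decomposition directly (it is built into its definition of $I(v,t)$), whereas you spell out why the grouping by direct investor is valid, but otherwise the arguments are identical.
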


\begin{proof}

We proceed by induction on $L_v$. For the base case, we have $L_v = 1$, so $v = r$. The only ancestor of $r$ is $r$ itself, so the only investment made is from $r$ itself. Hence, the total investment to $(r,t)$ is $I(r,t) = c(r) = L_r c(r)$.

Suppose that $L_v > 1$, so that $v$ is a proper descendant of $r$. If no ancestors of $v$ directly invested in $v$, then again $I(v,t) = c(v) \leq L_v c(v)$. Suppose that $(u_1,t_1), \ldots, (u_k, t_k)$ directly invested in $(v,t)$. Note that each $u_i$ must be a proper ancestor of $v$ and so $L_{u_i} \leq L_v - 1$. The induction hypothesis implies that $I(u_i,t_i) \leq L_{u_i} c(u_i)$ for each $i$. Excluding the investment from $(v,t)$ to itself, the total investment in $(v,t)$ is

\begin{align*}
    I(v,t) - c(v) &= \sum_{i = 1}^k \frac{I(u_i,t_i)}{c(u_i)} I_{u_i}^{t_i}(v,t) \\
    &\leq \sum_{i = 1}^k \frac{L_{u_i} c(u_i)}{c(u_i)} I_{u_i}^{t_i}(v,t) \\ 
    &= \sum_{i = 1}^k L_{u_i} I_{u_i}^{t_i} (v,t)  \\
    &\leq (L_v - 1)\sum_{i = 1}^k I_{u_i}^{t_i}(v,t)
\end{align*}

Lastly, we use the fact from \cref{lemma: direct} that the total direct investment into $(v,t)$ is at most $c(v)$, so this final sum is at most $c(v)$. Thus, we have that $I(v,t) - c(v) \leq (L_v - 1) c(v)$ implying that $I(v,t) \leq L_v c(v)$. 

\end{proof}

We will not only want to look at the amount invested in a pair, but also how much a pair invests in other pairs. To this end, we define $IM(v,t) = \sum_{\substack{(v,t) \text{ invests} \\ \text{in } (w,t'), w \not = v}} I_v^t(w,t')$ to be the total investment made by $(v,t)$ to its proper descendants. Again, we can focus on the pairs that are directly invested in as follows: $IM(v,t) = \sum_{\substack{(v,t) \text{ directly } \\ \text{invests in } (w,t')}} I_v^t(w,t') + \frac{I_v^{t}(w,t')}{c(w)} IM(w,t')$. 

\begin{lemma}\label{lemma: investmentmade}
For any pair $(v,t)$, $IM(v,t) \leq (D-L_v)c(v)$.
\end{lemma}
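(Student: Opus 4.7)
The plan is to prove the bound by induction on the co-level $D - L_v$, using the recursive decomposition of $IM$ stated right before the lemma together with the direct-investment bound from \cref{lemma: direct}. Since only ancestors can invest in a pair, every pair that $(v,t)$ invests in lies in a proper subtree $\T_w$ with $L_w > L_v$, so the induction is well founded.

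For the base case take $L_v = D$, i.e.\ $v$ is a leaf. Then $v$ has no proper descendants and no pair $(w,t')$ with $w \neq v$ receives any investment from $(v,t)$, so $IM(v,t) = 0 = (D - L_v)c(v)$. More generally, if $(v,t)$ makes no direct investment in any proper descendant (either because $F_v^t$ is empty or because it satisfied everything without overflow), then $IM(v,t) = 0$ and the bound is immediate.

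For the inductive step, suppose the bound holds for every pair $(w,t')$ with $L_w > L_v$. By the recursive formula,
\[
IM(v,t) \;=\; \sum_{\substack{(v,t)\text{ directly}\\ \text{invests in }(w,t')}} \Bigl( I_v^t(w,t') + \tfrac{I_v^t(w,t')}{c(w)} IM(w,t') \Bigr).
\]
Every such $w$ is a proper descendant of $v$, so $L_w \geq L_v + 1$, and by the induction hypothesis $IM(w,t') \leq (D - L_w)c(w) \leq (D - L_v - 1)c(w)$. Substituting,
\[
IM(v,t) \;\leq\; \sum_{(w,t')} I_v^t(w,t') \Bigl(1 + \tfrac{(D - L_v - 1)c(w)}{c(w)}\Bigr) \;=\; (D - L_v)\sum_{(w,t')} I_v^t(w,t').
\]
Finally, \cref{lemma: direct} says the total direct investment made by $(v,t)$ is at most $c(v)$, so the remaining sum is at most $c(v)$ and we obtain $IM(v,t) \leq (D - L_v)c(v)$.

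The only potential obstacle is making sure the two ingredients mesh correctly: that the recursive identity truly covers every indirect descendant exactly once (which follows from the inductive-on-walk-length equivalence noted in the definition of investment) and that the $w$'s appearing in the outer sum strictly satisfy $L_w > L_v$ so the induction hypothesis applies. Both are immediate from the structural remarks in the preceding paragraphs, so the argument is essentially a clean two-line induction once the recursive formula is in hand.
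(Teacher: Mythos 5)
Your proof is correct and follows essentially the same route as the paper's: induction on $D - L_v$, the recursive decomposition of $IM$ over direct investments, the bound $L_w \geq L_v + 1$ for descendants, and the total-direct-investment bound from \cref{lemma: direct}. The only difference is cosmetic — you replace $(D - L_w)$ by $(D - L_v - 1)$ before summing, whereas the paper keeps the $L_{w_i}$ terms and bounds them at the end.
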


\begin{proof}

We proceed by induction on $D-L_v$. For the base case, $D-L_v = 0$, so $v$ is a leaf. In this case, $v$ has no proper descendants and so $IM(v,t) = 0 = (D-L_v) c(v)$.

Now, suppose that $D-L_v > 0$ so that $v$ is not a leaf. If $(v,t)$ does not directly invest in any pair we are done, so suppose that $(v,t)$ directly invests in $(w_1,t_1), \ldots, (w_k,t_k)$. By \cref{lemma: direct}, we know the total direct investment made by $(v,t)$ is at most $c(v)$. In symbols, $\sum_{i = 1}^k I_v^t(w_i,t_i) \leq c(v)$. Also, each $w_i$ is a proper descendant of $v$ and so the induction hypothesis implies that $IM(w_i,t_i) \leq (D - L_{w_i}) c(w_i)$ for all $i$. Using these two facts gives,

\begin{align*}
    IM(v,t) &= \sum_{i = 1}^k I_v^t(w_i,t_i) + \frac{I_v^t(w_i,t_i)}{c(w_i)}IM(w_i,t_i) \\
    &\leq  \sum_{i = 1}^k I_v^t(w_i,t_i) + \frac{I_v^t(w_i,t_i)}{c(w_i)}(D - L_{w_i})c(w_i) \\
    &= \sum_{i = 1}^k (D - L_{w_i} + 1) I_v^t(w_i,t_i) \\
    &\leq \sum_{i = 1}^k (D - L_v) I_v^t(w_i,t_i) \\
    &\leq (D- L_v) c(v)
\end{align*}

Above we have used the fact that $L_{w_i} \geq L_v + 1$ since descendants have higher level and so $D - L_{w_i} + 1 \leq D - L_v$.

\end{proof}

\subsection{Charging Scheme}

Now, we begin the construction of the charges we will make. For a set of pairs, $S$, we let $I_v^t(S) = \sum_{(w,t') \in S} I_v^t(w,t')$ be the total investment $(v,t)$ made to the pairs in $S$.

\begin{lemma}\label{lemma: construction}
Let $(S,t)$ be a service of ALG and suppose $v \in S$ is late at time $t$. Then, there exists a set $U_v^t \subseteq \T_v$ of critically overdue nodes such that $I_v^t(U_v^t) = c(v)$.
\end{lemma}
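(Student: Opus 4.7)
The plan is to induct on $D - L_v$, moving from the leaves of $\T$ up toward the root. The guiding intuition, consistent with the waterfall metaphor, is that a late pair $(v,t)$ is either critically overdue itself (in which case $U_v^t=\{(v,t)\}$ trivially works) or the lateness has cascaded via the $v$-fall to a collection of late descendants, which we can recursively decompose. At each level, Lemma~\ref{lemma: direct} tells us that the direct investment out of $(v,t)$ totals exactly $c(v)$ whenever $F_v^t$ fails to clear $\T_v^t$, giving us precisely the ``budget'' we need to match.

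For the base case, take $v$ to be a leaf. Then $\T_v^t$ consists only of requests issued at $v$ itself, and these are all satisfied automatically because $v\in S$. So $F_v^t$ trivially satisfies all requests in $\T_v^t$, which by the observation preceding Lemma~\ref{lemma: latepaths} forces $v$ to be critically overdue. Setting $U_v^t=\{(v,t)\}$ yields $I_v^t(U_v^t)=c(v)$ under the convention that $(v,t)$ invests in itself with value $c(v)$. For the inductive step, if $v$ is already critically overdue at $t$, the same singleton choice works.

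The substantive case is when $v$ is late but not critically overdue. First I would verify that every pair $(u,t'')$ directly invested in by $(v,t)$ has $u$ late at $t''$. Lemma~\ref{lemma: latepaths} immediately handles $u\in F_v^t$ (at time $t$), so the work is to handle the overflow nodes. Let $\rho$ be the earliest due request in $\T_v^t$ not satisfied by $F_v^t$; an overflow node $u$ lies on $P(v\to\rho)$. By the subtree property, any service of ALG that ever satisfies $\rho$ includes $u$, so $u$'s next inclusion $t'$ satisfies $t'\leq d_\rho$, and $\rho$ is still pending at $t'$ (else $u$ would have been included earlier). Since Lemma~\ref{lemma: latepaths} applied at time $t$ gives $d^t(u)=d_\rho$, and since $v$'s lateness combined with the subtree property gives $d_\rho<\nos{v}{t}\leq\nos{u}{t}$, we get $t'<\nos{u}{t}$, hence $\nos{u}{t'}=\nos{u}{t}$. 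Therefore $d^{t'}(u)\leq d_\rho<\nos{u}{t'}$, so $u$ is late at $t'$.

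Now I can apply the induction hypothesis to each direct investment target $(u,t'')$, obtaining $U_u^{t''}\subseteq\T_u$ of critically overdue pairs with $I_u^{t''}(U_u^{t''})=c(u)$. Define $U_v^t=\bigcup_{(u,t'')}U_u^{t''}$. Since $F_v^t$ does not satisfy all requests in $\T_v^t$ (otherwise $v$ would be critically overdue), Lemma~\ref{lemma: direct} yields $\sum_{(u,t'')}I_v^t(u,t'')=c(v)$. The recursive definition of investment gives $I_v^t(w,t''')=\frac{I_v^t(u,t'')}{c(u)}\,I_u^{t''}(w,t''')$ for $(w,t''')\in U_u^{t''}$, so
\[
I_v^t(U_v^t)=\sum_{(u,t'')}\frac{I_v^t(u,t'')}{c(u)}\,I_u^{t''}(U_u^{t''})=\sum_{(u,t'')}I_v^t(u,t'')=c(v).
\]
The main obstacle is the overflow-lateness argument above, since it is the one place where the deadline bookkeeping across ALG's and OPT's schedules must be done explicitly; the remainder is pure induction. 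A secondary subtlety is the implicit disjointness used when summing over the $U_u^{t''}$, which follows from the tree-like structure of direct investment chains (each invested-in pair is reached through a unique chain of direct investments, because a node included in a fall resets its price and cannot be re-invested in at the same service construction).
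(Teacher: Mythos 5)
Your overall structure (induction on $D-L_v$, splitting into the critically-overdue case and the not-critically-overdue case, applying Lemma~\ref{lemma: direct} to get total direct investment exactly $c(v)$, and folding the recursive sets together) matches the paper's. The base case and the treatment of $u \in F_v^t$ via the contrapositive of Lemma~\ref{lemma: latepaths} are correct. But your overflow-node argument — which you correctly identify as the ``main obstacle'' — has a real gap.

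You write that ``$v$'s lateness combined with the subtree property gives $d_\rho<\nos{v}{t}$,'' where $\rho$ is the earliest request in $\T_v^t$ not satisfied by $F_v^t$. This does not follow. Lateness of $v$ at $t$ only says $d^t(v)<\nos{v}{t}$, and $d^t(v)$ is the deadline of the \emph{earliest} due request in $\T_v^t$ — which \emph{is} satisfied by the fall (it is the first one considered). That deadline can be strictly smaller than $d_\rho$, so $d^t(v)<\nos{v}{t}$ gives no bound on $d_\rho$. (Indeed, when $v$ \emph{is} critically overdue one can have $d_\rho \geq \nos{v}{t}$, so the bound genuinely needs the hypothesis you are not using.) Relatedly, your citation ``Lemma~\ref{lemma: latepaths} applied at time $t$ gives $d^t(u)=d_\rho$'' is off for the same reason: that lemma concerns the earliest request in $\T_v^t$, not the overflow request. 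The fact $d^t(u)=d_\rho$ happens to be true, but needs a separate argument (any earlier request in $\T_u^t$ would have forced $u$ into $S$ before the budget ran out), and in any case it is not what closes the gap.

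What is missing is exactly where the ``not critically overdue'' assumption earns its keep, and it is what the paper spends most of the inductive case on. Since $(v,t)$ is not critically overdue, there is a next time $t'>t$ with $v$ in a service of ALG, $v$ is late at $t'$, and $t'$ lies in the same $v$-phase. Because no late request can arrive mid-phase (OPT could not satisfy it), the earliest due request in $\T_v^{t'}$ must have arrived by $t$ and must not have been satisfied at $t$; combined with the fact that $\rho$ is still pending at $t'$, this pins down the earliest due request at $t'$ as $\rho$ itself. Then lateness of $v$ at $t'$ gives $d_\rho = d^{t'}(v)<\nos{v}{t'}=\nos{v}{t}$, which is the inequality you asserted. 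From there your chain $t'\leq d_\rho<\nos{v}{t}\leq\nos{u}{t}$, $\nos{u}{t'}=\nos{u}{t}$, and $d^{t'}(u)\leq d_\rho<\nos{u}{t'}$ goes through. With this repair you also get, as a byproduct, that the overflow nodes' next inclusion is precisely at time $t'$, which is how the paper packages the same conclusion (by applying Lemma~\ref{lemma: latepaths} to $v$ at time $t'$). Everything downstream of this point in your proposal is fine.
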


\begin{proof}
We proceed by induction on $D-L_v$.

For the base case, $D-L_v = 0$, so $v$ is a leaf. In this case, we claim $v$ is critically overdue at time $t$ and hence $U_v^t = \{(v,t)\}$ satisfies the requirements as $I_v^t(v,t) = c(v)$. Suppose that there was another service of ALG including $v$ at a time $t' \in (t, \nos{v}{t})$. Further, suppose that $v$ is late at time $t'$. Now, since all requests at $v$ are satisfied after time $t$ by $v$'s inclusion, it must be that some other request $\rho$ arrived at $v$ after time $t$. However, $v$ late at time $t'$ implies $t < a_{\rho} \leq d_{\rho} < \nos{v}{t}$. This implies $\rho$ becomes due before OPT could have satisfied it, contradicting feasibility of OPT. Thus, $(v,t)$ is critically overdue.

Now, suppose that $D-L_v > 0$, so $v$ is not a leaf. If $(v,t)$ is critically overdue, we again have $U_v^t = \{(v,t)\}$ satisfies the requirements. Otherwise, suppose that $v$ is not critically overdue at time $t$. Then, by definition, there must be a time $t' > t$ at which ALG transmits a service, and this service contains $v$, which is late at time $t'$. By the partitioning of phases into early and late parts, we can assume that time $t'$ is the very next time that ALG includes $v$ after time $t$ since at time $t'$, $v$ must be late if it is at a even later time in the phase. 

Consider the request $\gamma$ whose satisfaction caused $v$ to be included at time $t'$ and $\rho$ the last request considered by $F_v^t$. We claim that $\rho = \gamma$. If $\gamma \not = \rho$, then it must be the case that $d_{\gamma} < d_{\rho}$. However, the definition of $v$-falls imply they consider requests in earliest deadline order. Thus, for $\rho$ to be considered before $\gamma$ it must be the case that $\gamma$ arrived after time $t$. But, this implies $t'$ is in another phase and so $v$ was critically overdue at time $t$, a contradiction. Hence, $\rho = \gamma$.

Then, \cref{lemma: latepaths}  implies all the nodes in $P(v \to \rho)$ are late at time $t'$ as $\rho$ is the earliest request in $\T_v^{t'}$. Also, the contrapositive of the last statement in the lemma implies that any $u \in F_v^t$ is late at time $t$ since $(v,t)$ is not critically overdue. Thus, all of the nodes in $F_v^t$ and the overflow nodes $P(v \to \rho)$ are higher level nodes than $v$ that are late so the induction hypothesis applies to these nodes. In particular, if $(w_1,t_1), \ldots, (w_k,t_k)$ are the nodes that $(v,t)$ directly invested in, then we have that for each $i$ there exists a set of critically overdue nodes, $U_{w_i}^{t_i}$, satisfying $I_{w_i}^{t_i}(U_{w_i}^{t_i}) = c(w_i)$. Since $(v,t)$ is not critically overdue, it must be the case that $F_v^t$ did not satisfy every request in $\T_v^t$ as mentioned previously. Thus, by \cref{lemma: direct} we have that the total direct investment made by $(v,t)$ is exactly $c(v)$.  Hence, we have that $\sum_{i = 1}^k I_v^t(w_i,t_i) = c(v)$. Defining $U_v^t = \cup_{i = 1}^k U_{w_i}^{t_i}$, we then have,
\begin{align*}
    I_v^t(U_v^t) &= \sum_{i = 1}^k I_v^t(U_{w_i}^{t_i}) \\
    &= \sum_{i = 1}^k \frac{I_v^t(w_i,t_i)}{c(w_i)}I_{w_i}^{t_i}(U_{w_i}^{t_i}) \\
    &= \sum_{i = 1}^k I_v^t(w_i,t_i) = c(v)
\end{align*}
Also, $U_v^t \subseteq \T_v$ contains only nodes that are critically overdue.

\end{proof}

Fix any service $(S, t)$ of ALG. \cref{lemma: construction} implies for any late node $v \in S$, we can pay for the $c(v)$ in this service by charging the critically overdue nodes in $U_v^t$ the amount $(v,t)$ invested in them. This will be the charges we make for late nodes. Note, at this point the total charge to any critically overdue node, $(v,t)$, will be at most $L_v$. This follows since we at most charge a node the total investment made in it and this is at most $L_v c(v)$ by \cref{lemma: investedin}.

We also need to pay for early nodes of a service, which we do next. The analysis above essentially covers the cost of all nodes ``above" a node $v$, i.e. its ancestors. Now, we want to bound the cost of all nodes ``below" a node $v$, i.e. its descendants. To continue the charging strategy, we will charge every critically overdue node, $(v,t)$, for all of the early nodes that it invests in. This is at most the totality of all investments made by $(v,t)$ so this adds a charge of at most $D-L_v$ to $(v,t)$ according to \cref{lemma: investmentmade}. Thus, each node is charged $L_v$ for nodes above it and $D-L_v$ for nodes below it for a total of $D$. We just need to argue every cost is in fact covered by this strategy.

\begin{lemma}\label{lemma: coverage}
Let $(S,t)$ be a service of ALG. Then, the charging strategy pays for all the costs of nodes in $S$.
\end{lemma}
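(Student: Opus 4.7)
The plan is to split the nodes of $S$ into those that are late at time $t$ and those that are early, and handle each class via a different ingredient of the charging scheme.

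For each late $v \in S$, I would invoke \cref{lemma: construction} at $(v,t)$ to obtain $U_v^t \subseteq \T_v$, a set of critically overdue pairs with $I_v^t(U_v^t) = c(v)$. The late portion of the scheme contributes $I_v^t(u^*, t^*)$ to each $(u^*, t^*) \in U_v^t$ on behalf of $v$, and these contributions sum to exactly $c(v)$, paying for $v$.

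For each early $v \in S$, first note that $v \notin P_\rho$, since every node on $P_\rho$ is late at time $t$ by \cref{lemma: latepaths}. Hence $v$ was added by some fall, and \cref{lemma: direct} gives that the direct investments into $(v,t)$ sum to $c(v)$. The IM portion of the scheme contributes $I_{v^*}^{t^*}(v,t)$ to each critically overdue $(v^*,t^*)$ investing in $(v,t)$, so the total charge toward $v$ equals $C(v,t) := \sum I_{v^*}^{t^*}(v,t)$, summed over all critically overdue $(v^*,t^*)$ investing in $(v,t)$. The remaining task is to show $C(v,t) \geq c(v)$. I would establish this by induction on $L_v$ for every non-$P_\rho$ pair $(v,t)$ appearing in a service with $v$ early at that time. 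Decomposing over direct investors and using that an early $v$ cannot be critically overdue (so there is no self-contribution to $C(v,t)$), one obtains $C(v,t) = \sum_{(u, t_u)} \frac{I_u^{t_u}(v,t)}{c(u)} C(u,t_u)$, where the sum is over direct investors of $(v,t)$. The key step is to argue that every direct investor $(u,t_u)$ of an early $v$ is either critically overdue, or is itself an early non-$P_\rho$ pair: for same-time type (a) investors, a late $u$ with an early $v \in F_u^{t_u}$ must be critically overdue by the second part of \cref{lemma: latepaths}; for past overflow type (b) investors, the proof of \cref{lemma: construction} shows that a late but non-critically-overdue source would force $v$ to be late at time $t$, contradicting $v$ being early. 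In either surviving case, $C(u,t_u) \geq c(u)$, either by the self-investment of a critically overdue pair or by the induction hypothesis applied at the strictly smaller level $L_u$. Combined with the identity $\sum_{(u,t_u)} I_u^{t_u}(v,t) = c(v)$ from \cref{lemma: direct}, this yields $C(v,t) \geq c(v)$.

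The main obstacle will be executing the case analysis on direct investors of early nodes cleanly: ruling out late-but-not-critically-overdue investors requires combining \cref{lemma: latepaths} with the lateness property of overflow nodes that emerges implicitly in the proof of \cref{lemma: construction}, and it is important that both same-time and past-overflow direct investments be treated uniformly so that the induction on $L_v$ closes.
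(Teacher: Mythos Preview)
Your proposal is correct and follows essentially the same route as the paper's proof: split into late and early nodes, dispatch the late ones via \cref{lemma: construction}, and for early nodes run an induction on $L_v$ whose key step is the dichotomy that every direct investor of an early pair is either critically overdue or itself early. The paper carries out exactly this induction, with the same two-case analysis on direct investors (same-time fall inclusion via \cref{lemma: latepaths}, and overflow investors via the phase argument); the only cosmetic difference is that you package the inductive quantity as $C(v,t)$ explicitly, whereas the paper tracks the covering charges directly. One small remark: for the overflow case you cite ``the proof of \cref{lemma: construction}'' as showing a late non-critically-overdue source forces the overflow node late at its next inclusion, but the paper actually re-argues this inline in the coverage proof with a slightly finer case split on whether the next inclusion of $u$ goes through $v$; you will want to spell that out rather than rely on the citation.
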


\begin{proof}

Any late node is clearly paid for by the definition of the charging scheme and \cref{lemma: construction}. Next, we argue all early nodes are also paid for. Note any node on a path to a due request cannot be early by \cref{lemma: latepaths}. Thus, if a node is early it cannot be on the path to a request that triggers a service of ALG. Consequently, \cref{lemma: direct} implies the total investments to an early node must exactly equal its cost.

We proceed by induction on $L_v$ to show if $v$ is early at time $t$ then the cost of $v$ is exactly paid for by charges to some critically overdue nodes as dictated by the charging scheme. For the base case, $L_v = 1$ and we have that $v = r$. We know $r$ is never early by definition of ALG only including $r$ when a request becomes due and so this case vacuously holds. 

Suppose that $L_v > 1$. Let $(u,t')$ be a pair that directly invested in $(v,t)$. If $(u,t')$ is critically overdue, then we know by the charging scheme that $(u,t')$ pays for the investment $I_u^{t'}(v,t)$ that it made to $(v,t)$. We just need to ensure that investments of the other pairs that are not critically overdue were paid for. First, we claim all non-critically overdue pairs that invested in $(v,t)$ are early. For the sake of contradiction, suppose that $(u,t')$ directly invested in $(v,t)$ and $(w,t')$ is late but not critically overdue. 
\begin{itemize}
    \item If $v \in F_u^{t'}$ is early at time $t'$, then \cref{lemma: latepaths} implies that $(u,t')$ is critically overdue, a contradiction.
    \item If $F_u^{t'}$ does not add $v$ at time $t$, then $v$ was an overflow node for $(u,t')$. Since $(u,t')$ is late but not critically overdue, we know the next time $t''$ that ALG includes $u$ after time $t'$ is in the same $u$-phase as time $t'$. However, since $v$ was an overflow node at time $t'$, we know that the earliest request in $\T_u^{t'}$ that is not satisfied is in $\T_v^{t'}$. 
    
    \begin{itemize}
        \item If the next time we enter $u$ the path includes $v$, then this happens at time $t''$ and so $t = t''$. Then, \cref{lemma: latepaths} implies $(v,t)$ is late since its on the path to the earliest due request in $\T_u^{t}$. 
        \item If the next time we enter $u$ the path does not include $v$, it must be that a new request arrived after time $t'$ causing $u$ to be late. However, late requests cannot arrive in the middle of a phase as it would become due before OPT includes $u$ again. 
    \end{itemize}
\end{itemize}
%If $v \in F_u^{t'}$ is early at time $t'$, then \cref{lemma: latepaths} implies that $(u,t')$ is critically overdue, a contradiction. If $F_u^{t'}$ does not add $v$ at time $t$, then $v$ was an overflow node for $(u,t')$. Since $(u,t')$ is late but not critically overdue, we know the next time $t''$ that ALG includes $u$ after time $t'$ is in the same $u$-phase as time $t'$. However, since $v$ was a overflow node at time $t'$, we know that the earliest request in $\T_u^{t'}$ that is not satisfied is in $\T_v^{t'}$. If the next time we enter $u$ the path includes $v$, then this happens at time $t''$ and so $t = t''$. Then, \cref{lemma: latepaths} implies $(v,t)$ is late since its on the path to the earliest due request in $\T_u^{t}$. Otherwise, if $v$ is not included on this earliest path, it must be that a new request arrived after time $t'$ causing $u$ to be late. However, late requests cannot arrive in the middle of a phase as it would become due before OPT includes $u$ again. 
In all cases, we reach a contradiction and so if $(u,t')$ is not critically overdue then it must be early.

Now, for each $(u,t')$ that is early we have that $u$ is an ancestor of $v$ so the induction hypothesis implies that some charges to critically overdue nodes paid for $c(w)$. In particular, by our charging scheme this means there were critically overdue nodes $(w_1, t_1), \ldots , (w_j, t_k)$ satisfying $\sum_{i = 1}^k I_{w_i}^{t_i}(u,t') = c(u)$. However, since $v$ is an early descendant of $u$, $v$ is also an early descendant of each $w_i$. Hence, our charging scheme would also ensure that $(w_i, t_i)$ paid for its investment to $(v,t)$ since it paid its investment to $(u,t')$. By definition of the investments, the total investment made to $(v,t)$ by these critically overdue nodes is exactly
$$\sum_{i = 1}^k I_{w_i}^{t_i}(v,t) =  \sum_{i = 1}^k \frac{I_{w_i}^{t_i}(u,t')}{c(u)} I_u^{t'}(v,t) = \frac{I_u^{t'}(v,t)}{c(u)} \sum_{i = 1}^k I_{w_i}^{t_i}(u,t') = I_u^{t'}(v,t)$$
Now, since this holds true for all early pairs investing in $(v,t)$, we have that all direct investments to $(v,t)$ are paid for by our charging scheme. But the total direct investment for an early node is its cost, and so $c(v)$ is paid for by the charges. Thus, the charging scheme pays for all of the costs of the service $(S,t)$.

\end{proof}

Summarizing the charging strategy above gives a proof of \cref{theorem: main}.

\begin{proof}

For any service $(S,t)$ of ALG, we apply \cref{lemma: construction} to create sets of critically overdue nodes to charge. In particular, each node $v \in S$ that is late at time $t$ we pay for by charging each pair in $U_v^t$ the investment $(v,t)$ made to each. This pays for every late node as $I_v^t(U_v^t) = c(v)$. By \cref{lemma: investedin} the total investments made to $(v,t)$ is at most  $L_v c(v)$ and so the charge to any critically overdue node is at most $L_v$ to pay for late nodes through investments. Then, we charge to each critically overdue node the total amount of investments it made to early nodes. These charges add a total of charge of at most $D-L_v$ to any node $v$ as the total investments made by $(v,t)$ is at most $(D-L_v)c(v)$ by \cref{lemma: investmentmade}. Hence, the total charge to any critically overdue node is at most $D$. Lastly, \cref{lemma: coverage} implies every node in a service of ALG is paid for by this charging. Thus, the total cost of ALG's schedule is at most $D$ times the total cost of OPT's schedule. Consequently, ALG is $D$-competitive. 

\end{proof}

One thing to note about the above analysis is that its basically tight. Most critically overdue nodes are charged exactly $D$. This optimizes the charging possible if we restrict to only charging to nodes OPT has included prior to a fixed time. The only way to improve upon this analysis would be to allow charging early nodes as well. However, recursive instances give evidence that no such approach would be possible as the adversary could change the instance if it sees ALG early-aggregate many nodes. 
%However, recursive instances like the one given for \textsc{Double} give evidence that no such analysis would be possible as the adversary could change the instance if it sees ALG early-aggregate many nodes. 

%Similar recursive examples seem to indicate a recursive based algorithm is necessary for MLAPD. In particular, whenever a node is included in the service it is critical to consider adding nodes in its subtree. Otherwise, the adversary could ensure the next due request appears in that subtree and this causes the algorithm to pay much more than the optimal. However, such recursive strategies guarantee an eager-aggregation cost of at least $D$ as the nodes added recursively may have all been too early and we added some in every level. In a sense, recursive instances allow us to conclude that the worst early aggregation factor lower-bounds the competitive ratio for that instance. This gives some indication of why the competitive ratio for MLAPD may depend on $D$ from an adversarial standpoint.

At a higher level, there is a clear trade-off between the eager and lazy aggregation costs and neither cost seems to be more dominant than the other. So, perfectly balancing both likely would give the best competitive ratio for the problem and this is exactly what \textsc{Waterfall} does. Specifically, the maximum early aggregation is $D$ resultant from the early aggregation made by the root and the maximum lazy aggregation will come from the leaves which will also have a charge of $D$. More generally, the sum of the early charges and the late charges add to $D$ as the analysis shows. Although they may not always be exactly $D/2$ each, we always have the early and lazy aggregation costs are in equilibrium with each other balancing out to the fixed number $D$. Depending on the tree structure and the requests received, strictly more eager aggregation or strictly more lazy aggregation may be more beneficial and so this more general type of balancing seems necessary. This further indicates that $D$ may be the best competitive ratio possible for MLAPD. 

%Exactly balancing over aggre and under aggr not possible the sum tradeoff is best since leaves never overaggr and root never underaggrs but this analysis gives under + over = D and at level D/2 this does get the exact tradeoff.

\section{Conclusions}

%We improve beyond past results by avoiding the bottleneck of $L$-decreasing trees and instead attack the problem directly. Several structural properties of the problem are explored that allow us to maximize the effectiveness of the algorithm \textsc{Waterfall}.

One prominent open question left by this paper is whether $D$ is the true competitive ratio for MLAPD. In particular, a critical result would be new lower-bounds for MLAPD that improve upon the $2$ lower-bound for the $D = 2$ case. In addition, determining the difference in complexity between the delay and deadline cases is a critical open problem. 

For $D = 1$, we know the optimal competitive ratio for MLAPD is $1$ and that for MLAP it is $2$. For $D = 2$, we know that the optimal competitive ratio for MLAPD is $2$ and for MLAP it is essentially $3$ (improving the known lower-bound of $2.754$ to $3$ is still unresolved). Unless the problems for larger $D$ differ greatly than that for smaller $D$, this pattern would suggest that the optimal competitive ratio for MLAP is exactly $1$ more than that for MLAPD. 

In fact, \cite{OPD} essentially took an algorithm for JRPD and converted it into an algorithm for JRP using a primal dual framework. The resultant algorithm had competitive ratio $1$ more than the competitive ratio for the algorithm for JRPD. If this technique could be extended for $D > 2$ then using \textsc{Waterfall} as the MLAPD algorithm could result in a $D+1$ competitive algorithm for MLAP, which would improve on the best known $O(D^2)$-competitive algorithm. This is especially promising since the JRPD algorithm used for that result behaves almost exactly as \textsc{Waterfall} does on trees of depth $2$. 

Another possible approach would be to extend \textsc{Waterfall} to work for the delay case directly. By simply replacing deadlines with maturation times and making minor adjustments, the modified \textsc{Waterfall} may in fact be $D+1$-competitive. However, unlike the nice structural properties of phases being partitioned into late and early parts, no such structure is clear for the delay version. Thus, a different analysis strategy must be employed absent new structural insights. 

%For $D = 1,2$, we have tight upper and lower bounds for the competetive ratio (at least for the deadline case) as the known TCP-acknowledgment problem and JRP encapsulate exactly MLAP for these depths. However, even for $D = 3$, no better competitive algorithm is known, and this special case should be easier than the general problem. Also, for these smaller depths, the delay and deadline problem have very close competitive ratios (only differing by 1). On the other hand, for general depths we have a much larger difference in competitive ratio of the two problems, so it seems an improvement to the algorithm in \cite{GF} for delays should be possible, maybe even by using techniques from \cite{D} for the deadline case. Whether $D$ should be the best competitive ratio for the deadline case, or whether a constant independent of $D$ could be achieved is also an interesting direction, though likely much harder. All the lower bounds for MLAP rely on a variation of the problem, and this variation cannot be used to give better lower bounds as shown in \cite{OG}. Thus, finding stronger lower bounds, or lower bounds that don't use the variant used before would be interesting. In addition, looking at special types of trees, especially $L$-decreasing trees, to see which are truly the harder and easier instances for the problem would also be useful. Lastly, it would be interesting to see if randomness helps greatly either for the offline or online version of the problem. 

\printbibliography

@article{D,
  author    = {Niv Buchbinder and
               Moran Feldman and
               Joseph Naor and
               Ohad Talmon},
  title     = {O(depth)-Competitive Algorithm for Online Multi-level Aggregation},
  journal   = {CoRR},
  volume    = {abs/1701.01936},
  year      = {2017},
  url       = {http://arxiv.org/abs/1701.01936},
  archivePrefix = {arXiv},
  eprint    = {1701.01936},
  timestamp = {Tue, 17 Sep 2019 14:15:11 +0200},
  biburl    = {https://dblp.org/rec/journals/corr/BuchbinderFNT17.bib},
  bibsource = {dblp computer science bibliography, https://dblp.org}
}

@article{GF,
  author    = {Yossi Azar and
               Noam Touitou},
  title     = {General Framework for Metric Optimization Problems with Delay or with
               Deadlines},
  journal   = {CoRR},
  volume    = {abs/1904.07131},
  year      = {2019},
  url       = {http://arxiv.org/abs/1904.07131},
  archivePrefix = {arXiv},
  eprint    = {1904.07131},
  timestamp = {Thu, 25 Apr 2019 13:55:01 +0200},
  biburl    = {https://dblp.org/rec/journals/corr/abs-1904-07131.bib},
  bibsource = {dblp computer science bibliography, https://dblp.org}
}

@article{OG,
  author    = {Marcin Bienkowski and
               Martin B{\"{o}}hm and
               Jaroslaw Byrka and
               Marek Chrobak and
               Christoph D{\"{u}}rr and
               Luk\'a\v{s} Folwarczn\'y and
               Lukasz Jez and
               Jir{\'{\i}} Sgall and
               Nguyen Kim Thang and
               Pavel Vesel{\'{y}}},
  title     = {Online Algorithms for Multi-Level Aggregation},
  journal   = {CoRR},
  volume    = {abs/1507.02378},
  year      = {2015},
  url       = {http://arxiv.org/abs/1507.02378},
  archivePrefix = {arXiv},
  eprint    = {1507.02378},
  timestamp = {Mon, 13 Aug 2018 16:48:00 +0200},
  biburl    = {https://dblp.org/rec/journals/corr/BienkowskiBBCDF15.bib},
  bibsource = {dblp computer science bibliography, https://dblp.org}
}

@article{D2,
  author    = {Marcin Bienkowski and
               Jaroslaw Byrka and
               Marek Chrobak and
               Lukasz Jez and
               Jir{\'{\i}} Sgall},
  title     = {Better Approximation Bounds for the Joint Replenishment Problem},
  journal   = {CoRR},
  volume    = {abs/1307.2531},
  year      = {2013},
  url       = {http://arxiv.org/abs/1307.2531},
  archivePrefix = {arXiv},
  eprint    = {1307.2531},
  timestamp = {Mon, 13 Aug 2018 16:48:55 +0200},
  biburl    = {https://dblp.org/rec/journals/corr/BienkowskiBCJS13.bib},
  bibsource = {dblp computer science bibliography, https://dblp.org}
}

@article{8P,
  title = "Competitive Analysis of Organization Networks or Multicast Acknowledgment: How Much to Wait?",
  author = "Carlos Fisch Brito and Elias Koutsoupias and Shailesh Vaya",
  year = "2012",
  journal = "Algorithmica",
  number = "4",
  pages = "584-605",
  volume = "64",
}

@inproceedings{5P,
author = {Bienkowski, Marcin and Byrka, Jaroslaw and Chrobak, Marek and Jeundefined, undefinedukasz and Sgall, Ji\v{r}\'{\i} and Stachowiak, Grzegorz},
title = {Online Control Message Aggregation in Chain Networks},
year = {2013},
isbn = {9783642401039},
publisher = {Springer-Verlag},
address = {Berlin, Heidelberg},
url = {https://doi.org/10.1007/978-3-642-40104-6_12},
doi = {10.1007/978-3-642-40104-6_12},
abstract = {In the Control Message Aggregation (CMA) problem, control packets are generated over time at the nodes of a tree T and need to be transmitted to the root of T. To optimize the overall cost, these transmissions can be delayed and different packets can be aggregated, that is a single transmission can include all packets from a subtree rooted at the root of T. The cost of this transmission is then equal to the total edge length of this subtree, independently of the number of packets that are sent. A sequence of transmissions that transmits all packets is called a schedule. The objective is to compute a schedule with minimum cost, where the cost of a schedule is the sum of all the transmission costs and delay costs of all packets. The problem is known to be $mathbb{NP}$-hard, even for trees of depth 2. In the online scenario, it is an open problem whether a constant-competitive algorithm exists.We address the special case of the problem when T is a chain network. For the online case, we present a 5-competitive algorithm and give a lower bound of 2+φ≈3.618, improving the previous bounds of 8 and 2, respectively. Furthermore, for the offline version, we give a polynomial-time algorithm that computes the optimum solution.},
booktitle = {Proceedings of the 13th International Conference on Algorithms and Data Structures},
pages = {133–145},
numpages = {13},
location = {London, ON, Canada},
series = {WADS'13}
}

@article{OPD,
title = "Online make-to-order joint replenishment model: Primal-dual competitive algorithms",
abstract = "In this paper, we study an online make-to-order variant of the classical joint replenishment problem (JRP) that has been studied extensively over the years and plays a fundamental role in broader planning issues, such as the management of supply chains. In contrast to the traditional approaches of the stochastic inventory theory, we study the problem using competitive analysis against a worst-case adversary. Our main result is a 3-competitive deterministic algorithm for the online version of the JRP. We also prove a lower bound of approximately 2.64 on the competitiveness of any deterministic online algorithm for the problem. Our algorithm is based on a novel primal-dual approach using a new linear programming relaxation of the offline JRP model. The primal-dual approach that we propose departs from previous primal-dual and online algorithms in rather significant ways. We believe that this approach can extend the range of problems to which online and primal-dual algorithms can be applied and analyzed.",
author = "Niv Buchbinder and Tracy Kimbrel and Retsef Levi and Konstantin Makarychev and Maxim Sviridenko",
year = "2013",
month = jul,
day = "1",
doi = "10.1287/opre.2013.1188",
language = "English (US)",
volume = "61",
pages = "1014--1029",
journal = "Operations Research",
issn = "0030-364X",
publisher = "INFORMS Inst.for Operations Res.and the Management Sciences",
number = "4",
}

@article{PDI,
author = {Levi, Retsef and Roundy, Robin and Shmoys, David},
year = {2006},
month = {01},
pages = {267-284},
title = {Primal-Dual Algorithms for Deterministic Inventory Problems.},
volume = {31},
journal = {Math. Oper. Res.},
doi = {10.1145/1007352.1007410}
}

@article{SensorNets,
author = {Becchetti, LUCA and Marchetti-Spaccamela, Alberto and Vitaletti, Andrea and Korteweg, Peter and Skutella, Martin and Stougie, Leen},
title = {Latency-Constrained Aggregation in Sensor Networks},
year = {2010},
issue_date = {December 2009},
publisher = {Association for Computing Machinery},
address = {New York, NY, USA},
volume = {6},
number = {1},
issn = {1549-6325},
url = {https://doi.org/10.1145/1644015.1644028},
doi = {10.1145/1644015.1644028},
abstract = {A sensor network consists of sensing devices which may exchange data through wireless communication; sensor networks are highly energy constrained since they are usually battery operated. Data aggregation is a possible way to save energy consumption: nodes may delay data in order to aggregate them into a single packet before forwarding them towards some central node (sink). However, many applications impose constraints on the maximum delay of data; this translates into latency constraints for data arriving at the sink.We study the problem of data aggregation to minimize maximum energy consumption under latency constraints on sensed data delivery, and we assume unique communication paths that form an intree rooted at the sink. We prove that the offline problem is strongly NP-hard and we design a 2-approximation algorithm. The latter uses a novel rounding technique.Almost all real-life sensor networks are managed online by simple distributed algorithms in the nodes. In this context we consider both the case in which sensor nodes are synchronized or not. We assess the performance of the algorithm by competitive analysis. We also provide lower bounds for the models we consider, in some cases showing optimality of the algorithms we propose. Most of our results also hold when minimizing the total energy consumption of all nodes.},
journal = {ACM Trans. Algorithms},
month = dec,
articleno = {13},
numpages = {20},
keywords = {Competitive analysis, data aggregation, distributed algorithms, wireless sensor networks}
}

@article{TCPDelay,
author = {Dooly, Daniel R. and Goldman, Sally A. and Scott, Stephen D.},
title = {On-Line Analysis of the TCP Acknowledgment Delay Problem},
year = {2001},
issue_date = {March 2001},
publisher = {Association for Computing Machinery},
address = {New York, NY, USA},
volume = {48},
number = {2},
issn = {0004-5411},
url = {https://doi.org/10.1145/375827.375843},
doi = {10.1145/375827.375843},
abstract = {We study an on-line problem that is motivated by the networking problem of dynamically adjusting of acknowledgments in the Transmission Control Protocol (TCP). We provide a theoretical model for this problem in which the goal is to send acks at a time that minimize a linear combination of the cost for the number of acknowledgments sent and the cost for the additional latency introduced by delaying acknowledgments. To study the usefulness of applying packet arrival time prediction to this problem, we assume there is an oracle that provides the algorithm with the times of the next L arrivals, for some L ≥ 0.We give two different objective functions for measuring the cost of a solution, each with its own measure of latency cost. For each objective function we first give an O(n2)-time dynamic programming algorithm for optimally solving the off-line problem. Then we describe an on-line algorithm that greedily acknowledges exactly when the cost for an acknowledgment is less than the latency cost incurred by not acknowledging. We show that for this algorithm there is a sequence of n packet arrivals for which it is Ω (***)-competitive for the first objective function, 2-competitive for the second function for L = 0, and 1-competitivefor the second function for L = 1. Next we present a second on-line algorithm which is a slight modification of the first, and we prove that it is 2-competitive for both objective functions for all L. We also give lower bounds on the competitive ratio for any deterministic on-line algorithm. These results show that for each objective function, at least one of our algorithms is optimal.Finally, we give some initial empirical results using arrival sequences from real network traffic where we compare the two methods used in TCP for acknowledgment delay with our two on-line algorithms. In all cases we examine performance with L = 0 and L = 1.},
journal = {J. ACM},
month = mar,
pages = {243–273},
numpages = {31},
keywords = {lookahead, Transmission Control Protocol (TCP), acknowledgment delay problem, competitive analysis, Internet traffic simulations}
}

@inproceedings{TCPOPTRand,
author = {Karlin, Anna R. and Kenyon, Claire and Randall, Dana},
title = {Dynamic TCP Acknowledgement and Other Stories about e/(e-1)},
year = {2001},
isbn = {1581133499},
publisher = {Association for Computing Machinery},
address = {New York, NY, USA},
url = {https://doi.org/10.1145/380752.380845},
doi = {10.1145/380752.380845},
abstract = {We present the first optimal randomized online algorithms for the TCP acknowledgment problem [5] and the Bahncard problem [7]. These problems are well-known to be generalizations of the classical online ski rental problem, however, they appeared to be harder. In this paper, we demonstrate that a number of online algorithms which have optimal competitive ratios of e/(e-1), including these, are fundamentally no more complex than ski rental. Our results also suggest a clear paradigm for solving ski rental-like problems.},
booktitle = {Proceedings of the Thirty-Third Annual ACM Symposium on Theory of Computing},
pages = {502–509},
numpages = {8},
location = {Hersonissos, Greece},
series = {STOC '01}
}

@article{OnlPDApproach,
author = {Buchbinder, Niv and (Seffi) Naor, Joseph},
title = {The Design of Competitive Online Algorithms via a Primal: Dual Approach},
year = {2009},
issue_date = {February 2009},
publisher = {Now Publishers Inc.},
address = {Hanover, MA, USA},
volume = {3},
number = {2–3},
issn = {1551-305X},
url = {https://doi.org/10.1561/0400000024},
doi = {10.1561/0400000024},
abstract = {The primal—dual method is a powerful algorithmic technique that has proved to be extremely useful for a wide variety of problems in the area of approximation algorithms for NP-hard problems. The method has its origins in the realm of exact algorithms, e.g., for matching and network flow. In the area of approximation algorithms, the primal—dual method has emerged as an important unifying design methodology, starting from the seminal work of Goemans and Williamson [60]We show in this survey how to extend the primal—dual method to the setting of online algorithms, and show its applicability to a wide variety of fundamental problems. Among the online problems that we consider here are the weighted caching problem, generalized caching, the set-cover problem, several graph optimization problems, routing, load balancing, and the problem of allocating ad-auctions. We also show that classic online problems such as the ski rental problem and the dynamic TCP-acknowledgement problem can be solved optimally using a simple primal—dual approach.The primal—dual method has several advantages over existing methods. First, it provides a general recipe for the design and analysis of online algorithms. The linear programming formulation helps detecting the difficulties of the online problem, and the analysis of the competitive ratio is direct, without a potential function appearing "out of nowhere." Finally, since the analysis is done via duality, the competitiveness of the online algorithm is with respect to an optimal fractional solution, which can be advantageous in certain scenarios.},
journal = {Found. Trends Theor. Comput. Sci.},
month = feb,
pages = {93–263},
numpages = {171}
}

@INPROCEEDINGS{Seiden00aguessing,
    author = {Steven S. Seiden},
    title = {A Guessing Game and Randomized Online Algorithms},
    booktitle = {Proceedings of the thirty-second annual ACM symposium on theory of computing (STOC), 2000},
    year = {2000},
    pages = {592--601}
}

@article{ImprovedLotSize,
author = {Aggarwal, Alok and Park, James K.},
title = {Improved Algorithms for Economic Lot Size Problems},
journal = {Operations Research},
volume = {41},
number = {3},
pages = {549-571},
year = {1993},
doi = {10.1287/opre.41.3.549},

URL = { 
        https://doi.org/10.1287/opre.41.3.549
    
},
eprint = { 
        https://doi.org/10.1287/opre.41.3.549
    
}
,
    abstract = { Many problems in inventory control, production planning, and capacity planning can be formulated in terms of a simple economic lot size model proposed independently by A. S. Manne (1958) and by H. M. Wagner and T. M. Whitin (1958). The Manne-Wagner-Whitin model and its variants have been studied widely in the operations research and management science communities, and a large number of algorithms have been proposed for solving various problems expressed in terms of this model, most of which assume concave costs and rely on dynamic programming. In this paper, we show that for many of these concave cost economic lot size problems, the dynamic programming formulation of the problem gives rise to a special kind of array, called a Monge array. We then show how the structure of Monge arrays can be exploited to obtain significantly faster algorithms for these economic lot size problems. We focus on uncapacitated problems, i.e., problems without bounds on production, inventory, or backlogging; capacitated problems are considered in a separate paper. }
}

@inproceedings{AdRev,
author = {Buchbinder, Niv and Jain, Kamal and Naor, Joseph Seffi},
title = {Online Primal-Dual Algorithms for Maximizing Ad-Auctions Revenue},
year = {2007},
isbn = {3540755195},
publisher = {Springer-Verlag},
address = {Berlin, Heidelberg},
abstract = {We study the online ad-auctions problem introduced by Mehta et al. [15]. We design a (1 - 1/e)-competitive (optimal) algorithm for the problem, which is based on a clean primal-dual approach, matching the competitive factor obtained in [15]. Our basic algorithm along with its analysis are very simple. Our results are based on a unified approach developed earlier for the design of online algorithms [7,8]. In particular, the analysis uses weak duality rather than a tailor made (i.e., problem specific) potential function. We show that this approach is useful for analyzing other classical online algorithms such as ski rental and the TCP-acknowledgement problem. We are confident that the primal-dual method will prove useful in other online scenarios as well.The primal-dual approach enables us to extend our basic ad-auctions algorithm in a straight forward manner to scenarios in which additional information is available, yielding improved worst case competitive factors. In particular, a scenario in which additional stochastic information is available to the algorithm, a scenario in which the number of interested buyers in each product is bounded by some small number d, and a general risk management framework.},
booktitle = {Proceedings of the 15th Annual European Conference on Algorithms},
pages = {253–264},
numpages = {12},
location = {Eilat, Israel},
series = {ESA'07}
}

@InProceedings{OneWareMultiRetail,
author="Levi, Retsef
and Sviridenko, Maxim",
editor="D{\'i}az, Josep
and Jansen, Klaus
and Rolim, Jos{\'e} D. P.
and Zwick, Uri",
title="Improved Approximation Algorithm for the One-Warehouse Multi-Retailer Problem",
booktitle="Approximation, Randomization, and Combinatorial Optimization. Algorithms and Techniques",
year="2006",
publisher="Springer Berlin Heidelberg",
address="Berlin, Heidelberg",
pages="188--199",
abstract="In this paper, we will consider a well-studied inventory model, called the one-warehouse multi-retailer problem (OWMR) and its special case the joint replenishment problem (JRP). As the name suggests, in this model there is one warehouse that orders a particular commodity from a supplier, in order to serve demand at N distinct retailers. We consider a discrete finite planning horizon of T periods, and are given the demand ditrequired for each retailer i=1,...,N in each time period t=1,...,T. There are two types of costs incurred: ordering costs (to model that there are fixed costs incurred each time the warehouse replenishes its supply on hand from the supplier, as well as the analogous cost for each retailer to be stocked from the warehouse) and holding costs (to model the fact that maintaining inventory, at both the warehouse and the retail store, incurs a cost). The aim of the model is to provide an optimization framework to balance the fact that ordering too frequently is inefficient for ordering costs, whereas ordering too rarely incurs excessive holding costs.",
isbn="978-3-540-38045-0"
}

@inproceedings{JRPHard,
author = {Nonner, Tim and Souza, Alexander},
title = {A 5/3-Approximation Algorithm for Joint Replenishment with Deadlines},
year = {2009},
isbn = {9783642020254},
publisher = {Springer-Verlag},
address = {Berlin, Heidelberg},
url = {https://doi.org/10.1007/978-3-642-02026-1_3},
doi = {10.1007/978-3-642-02026-1_3},
abstract = {The objective of the classical Joint Replenishment Problem (JRP) is to minimize ordering costs by combining orders in two stages, first at some retailers, and then at a warehouse. These orders are needed to satisfy demands that appear over time at the retailers. We investigate the natural special case that each demand has a deadline until when it needs to be satisfied. For this case, we present a randomized 5/3 -approximation algorithm, which significantly improves the best known approximation ratio of 1.8 obtained by Levi and Sviridenko (APPROX'06). We moreover prove that JRP with deadlines is APX-hard, which is the first such inapproximability result for a variant of JRP. Finally, we extend the known hardness results by showing that JRP with linear delay cost functions is NP-hard, even if each retailer has to satisfy only three demands.},
booktitle = {Proceedings of the 3rd International Conference on Combinatorial Optimization and Applications},
pages = {24–35},
numpages = {12},
location = {Huangshan, China},
series = {COCOA '09}
}

@article{OGJRP,
   title={Approximation algorithms for the joint replenishment problem with deadlines},
   volume={18},
   ISSN={1099-1425},
   url={http://dx.doi.org/10.1007/s10951-014-0392-y},
   DOI={10.1007/s10951-014-0392-y},
   number={6},
   journal={Journal of Scheduling},
   publisher={Springer Science and Business Media LLC},
   author={Bienkowski, Marcin and Byrka, Jarosław and Chrobak, Marek and Dobbs, Neil and Nowicki, Tomasz and Sviridenko, Maxim and Świrszcz, Grzegorz and Young, Neal E.},
   year={2014},
   month={8},
   pages={545–560}
}

@INPROCEEDINGS{Gathercast,
  author={Badrinath, B.R. and Sudame, P.},
  booktitle={Proceedings Ninth International Conference on Computer Communications and Networks (Cat.No.00EX440)}, 
  title={Gathercast: the design and implementation of a programmable aggregation mechanism for the Internet}, 
  year={2000},
  volume={},
  number={},
  pages={206-213},
  doi={10.1109/ICCCN.2000.885492}
}

@article{ProductionPlanning,
title = {Computational complexity of uncapacitated multi-echelon production planning problems},
journal = {Operations Research Letters},
volume = {8},
number = {2},
pages = {61-66},
year = {1989},
issn = {0167-6377},
doi = {https://doi.org/10.1016/0167-6377(89)90001-1},
url = {https://www.sciencedirect.com/science/article/pii/0167637789900011},
author = {Esther Arkin and Dev Joneja and Robin Roundy},
keywords = {production planning, multi-echelon systems, computational complexity},
abstract = {Recently there has been a flurry of research in the area of production planning for multi-echelon production-distribution systems with deterministic non-stationary demands and no capacity constraints. A variety of algorithms have been proposed to optimally solve these problems, with varying success. This paper investigates the issue of computational complexity of the problem for all commonly studied product structures, i.e. the single item, the serial system, the assembly system, the one-warehouse-N-retailer system, the distribution system, the joint replenishment system, and the general production-distribution system. Polynomial time algorithms are available for the single-item, serial and assembly systems. We prove that the remaining problems are NP-complete.}
}

@book{lotsizingbook,
  title={Multi-level lot sizing and scheduling: methods for capacitated, dynamic, and deterministic models},
  author={Kimms, Alf},
  year={2012},
  publisher={Springer Science \& Business Media}
}

@article{SchemesCMP,
  author    = {Edward Bortnikov and
               Reuven Cohen},
  title     = {Schemes for scheduling control messages by hierarchical protocols},
  journal   = {Comput. Commun.},
  volume    = {24},
  number    = {7-8},
  pages     = {731--743},
  year      = {2001},
  url       = {https://doi.org/10.1016/S0140-3664(00)00276-0},
  doi       = {10.1016/S0140-3664(00)00276-0},
  timestamp = {Thu, 20 Feb 2020 00:00:00 +0100},
  biburl    = {https://dblp.org/rec/journals/comcom/BortnikovC01.bib},
  bibsource = {dblp computer science bibliography, https://dblp.org}
}

@article{DynamicLotSize,
author = {Wagner, Harvey M. and Whitin, Thomson M.},
title = {Dynamic Version of the Economic Lot Size Model},
journal = {Management Science},
volume = {5},
number = {1},
pages = {89-96},
year = {1958},
doi = {10.1287/mnsc.5.1.89},

URL = { 
        https://doi.org/10.1287/mnsc.5.1.89
    
},
eprint = { 
        https://doi.org/10.1287/mnsc.5.1.89
    
}
,
    abstract = { A forward algorithm for a solution to the following dynamic version of the economic lot size model is given: allowing the possibility of demands for a single item, inventory holding charges, and setup costs to vary over N periods, we desire a minimum total cost inventory management scheme which satisfies known demand in every period. Disjoint planning horizons are shown to be possible which eliminate the necessity of having data for the full N periods. }
}

@article{LotSizeMSA,
author = {Crowston, Wallace B. and Wagner, Michael H.},
title = {Dynamic Lot Size Models for Multi-Stage Assembly Systems},
journal = {Management Science},
volume = {20},
number = {1},
pages = {14-21},
year = {1973},
doi = {10.1287/mnsc.20.1.14},

URL = { 
        https://doi.org/10.1287/mnsc.20.1.14
    
},
eprint = { 
        https://doi.org/10.1287/mnsc.20.1.14
    
}
,
    abstract = { A multi-stage assembly system is a special case of Veinott's general multi-facility system in that each facility may have any number of predecessors but at most a single successor. This paper presents two algorithms for computing optimal lot sizes in such systems with known time-varying demand. The first is a dynamic programming algorithm for which solution time increases exponentially with the number of time periods, but only linearly with the number of stages, irrespective of assembly structure. The second is a branch and bound algorithm intended for cases where the number of time periods is large but the structure is close to serial. Computational results are given and extensions considered. }
}

@misc{Priv,
  author = "L. L. C. Pedrosa",
  date = "2013",
  howpublished = " Private communication  reported by Bienkowski et al."
}

@INPROCEEDINGS{wiresensornet,
  author={Hu, F. and Cao, X. and May, C.},
  booktitle={International Conference on Information Technology: Coding and Computing (ITCC'05) - Volume II}, 
  title={Optimized scheduling for data aggregation in wireless sensor networks}, 
  year={2005},
  volume={2},
  number={},
  pages={557-561 Vol. 2},
  doi={10.1109/ITCC.2005.219}}

\newpage 

\appendix

\section{Proofs for section~\ref{section: paths}}\label{sec: pathappendix}

We begin with the proof of \cref{lemma: doubling}.
\begin{proof}
In symbols, the assumptions state that $d_{\rho_1} < \ldots < d_{\rho_k}$, $c(S_i) < c(O)$, and $P_{\rho_i} \subseteq P_{\rho_k} \subseteq O$ for all $i \leq k$. We will show that each interval transmitted by ALG has cost at at least double the previously transmitted interval, and use this to show that in reverse order each interval must be at least a power of two smaller than the cost of $O$. 

Let $P(\rho_{i-1} \to \rho_i) = P_{\rho_i} \setminus P_{\rho_{i-1}}$ be the nodes that need to be added to $P_{\rho_{i-1}}$ in order to reach the node at which $\rho_i$ was issued. Since $\rho_i$ triggered a service of ALG, we know that $\rho_i$ was not satisfied by previous services of ALG. Then, we know that the nodes in $P(\rho_{i-1} \to \rho_i)$ were not added to the service. By definition of ALG, this implies that $c(\rho_{i-1} \to \rho_i) > c(P_{\rho_{i-1}})$. Thus, $$c(P_{\rho_i}) = c(P_{\rho_{i-1}}) + c(\rho_{i-1} \to \rho_i) > 2c(P_{\rho_{i-1}})$$
In other words, the intervals of the requests are at least doubling each time a request is not satisfied. This implies that $c(P_{\rho_{i-1}}) < \frac{1}{2}c(P_{\rho_i})$ holds for all $i > 1$. As $c(P_{\rho_k}) \leq 2^{-(k-k)}c(P_{\rho_k})$, an easy induction shows that $c(P_{\rho_i}) \leq 2^{-(k-i)}c(P_{\rho_k})$ holds for all $i$. The definition of ALG then implies that $c(S_i) \leq 2c(P_{\rho_i}) \leq 2^{-(k-i)+1}c(P_{\rho_k})$. 

Suppose $\gamma$ is the earliest request due at the node $v$ that is the furthest node from the root contained in $O$. As $\rho_k$ was satisfied by $O$ by assumption, and $t \leq t_k$, it must be the case that the request $\gamma$ arrived before the deadline of $\rho_k$ and so has been issued at the time of $S_k$'s construction. Then, we know that $\gamma$ was not satisfied by $S_k$ otherwise $v$ being the last node in $O$'s interval would imply $c(S_k) \geq c(O)$. Thus, the same arguments from before imply that $c(P_{\rho_k}) \leq \frac{1}{2} c(P_{\gamma}) = c(O)$. Thus, for all $i$ we have that 
$$c(S_i) \leq 2^{-(k-i)+1}c(P_{\rho_k}) < 2^{-(k-i)+1}(\frac{1}{2}c(O)) = 2^{-(k-i)}c(O)$$
The total cost of these intervals ALG transmitted is
\begin{align*}
    \sum_{i = 1}^k c(S_i) = \sum_{i = 1}^k c(S_{k-i+1}) 
    = \sum_{i = 0}^{k-1} c(S_{k-i}) 
    < \sum_{i = 0}^{k-1} 2^{-i} c(O) 
    = (2 - 2^{-k})c(O)
\end{align*}
\end{proof}

Next, we show the improved competitive ratio for \textsc{Double}.

\begin{proof}
We simply improve the upper bound on $k$ from $D$ to $\frac{D}{2}-1$ to achieve the claimed competitiveness. Notice that the worst charge to a service of OPT happens when both a larger service and smaller services of ALG both charge to it. In this case, to achieve a $2$ charge from the larger service it must be that at least two nodes are not included by the last small service (one that is due and another that is added too early). So, we know that $k \leq D - 2$. Now, in one situation we could have every request is at its own node and each node has very large cost and so $k = D-2$. However, this will not achieve the largest charge. In particular, this means $c(S_i) = c(\rho_i)$ for each $i$ and so instead of $c(S_{k-i}) \leq 2^{-i} c(O)$ we have $c(S_{k-i}) \leq 2^{-(i+1)} c(O)$, which overall yields a decrease of at least $1$ in the charge. 

In order for $c(S_k)$ to be very close to the $c(O)$, which happens in the worst case, we need at least an additional node between the node $\rho_k$ is issued at and the last node served by $O$. This pushes the total cost of $S_k$ from $\frac{1}{2} c(O)$ closer to $c(O)$ in exchange for one fewer interval than $D-2$ could be transmitted in this range. However, the overall increase in cost is positive. In particular, even if we added infinitely many intervals, the total cost would be $\sum_{j = 1}^\infty 2^{-j} = 1$, whereas just losing one later interval also pushes the cost to $1$, so there is no improvement for choosing more intervals over a longer initial interval. In fact, if the path is finite, the longer interval is a strict improvement. Similarly, for any request $\rho_{k-i}$ where we would want to exchange a longer service $S_{k-i}$ for fewer total intervals, this is an improvement as $\sum_{j = i}^\infty 2^{-j} = 2^{1-i}$ which is exactly the largest we could make $S_{k-i}$ by sacrificing one smaller interval. Hence, the worst case occurs when each smaller interval is as large as possible, which requires at least one node to exist between any two request nodes. Thus, we remove half of the nodes from consideration giving that $k \leq \frac{D}{2} - 1$. Thus, ALG is $4 - 2^{1 - \frac{D}{2}}$-competitive.
\end{proof}

We note that the analysis above actually depends not only on $k$, but the minimal difference, $\epsilon$, between intervals charged to a longer service. The most accurate competitive ratio that could be achieved by the above analysis is likely closer to $4 - \epsilon k 2^{1-k}$ by taking the minimal differences into account appropriately. This indicates an even better competitive ratio may be possible for path instances.
%(here $\epsilon = 1$ since the overflow that would be caused by adding the next node to one of ALG's service's is $1$) though this refinement is not a significant improvement in most graphs and seems to need a much more involved analysis. 

%This kind of recursive counterexample shows the danger of eager-aggregation. If ALG aggregates too much (in this case includes node $D$ unnecessarily), the adversary could exploit that eager-aggregation. In this example, ALG keeps adding node $D$ when it should not. By choosing requests appropriately, the adversary can guarantee that ALG does the same eager-aggregation arbitrarily many times. On the other hand, OPT serves the same node $D$ only a single time. Hence, there is no hope for ALG to charge this huge cost to the single occurrence of $D$ appearing in OPT. Thus, ALG is forced to charge this cost to smaller cost nodes causing a larger charge than necessary. These types of instances show that it is critical to balance the eager-aggregation and lazy-aggregation of the algorithm precisely.

\end{document}